\newenvironment{keywords}%
{\begin{quotation} \noindent \textbf{Keywords:}}
{\end{quotation}}
\newcommand\TTstrut{\rule{0pt}{3.5ex}}         
\newcommand\Tstrut{\rule{0pt}{2.6ex}}         
\newcommand\Bstrut{\rule[-0.9ex]{0pt}{0pt}}   
\theoremstyle{plain}
\newtheorem{theorem}{Theorem}[section]
\newtheorem{proposition}[theorem]{Proposition}
\theoremstyle{definition}
\newtheorem{definition}[theorem]{Definition}
\newtheorem{example}[theorem]{Example}
\newcolumntype{?}[1]{!{\vrule width #1}}
\newcommand{\N}{\ensuremath{\mathbb{N}}}
\newcommand{\No}{\ensuremath{\mathbb{N}_0}}
\newcommand{\set}[2]{\ensuremath{\left\{ #1 \mid #2 \right\}}}
\newcommand{\multiset}[1]{\ensuremath{ {\{}\hspace{-0.8mm}{\{} #1 {\}}\hspace{-0.8mm}{\}} }}
\newcommand{\compmodel}{\ensuremath{\mathcal{M}}}
\newcommand{\cmcm}[1][k]{\ensuremath{\mathrm{CM}(#1)}}
\newcommand{\cmvm}[1][V,F]{\ensuremath{\mathrm{VM}(#1)}}
\newcommand{\cmspace}{\ensuremath{S}}
\newcommand{\opset}{\ensuremath{O}}
\newcommand{\predset}{\ensuremath{P}}
\newcommand{\predseq}{\ensuremath{\mathrm{pw}}}
\begin{document}

\title{A Systematic Approach to Programming} 
\author{Maurice Chandoo\footnote{Leibniz Universität Hannover,
        Institut für Theoretische Informatik,
        Appelstr.~4, 30167 Hannover, Germany; \hskip 2em      
        E-Mail: chandoo@thi.uni-hannover.de }}
\date{\vspace{-5ex}}

\maketitle

\begin{abstract}\noindent\textbf{Abstract.}
    Programming is commonly understood as the process of writing a computer program which accomplishes a specific task. 
    The average programmer often struggles to write a program which correctly solves the given task for all instances.  
    Failure to write a correct program indicates that he is unable to express what he wants the computer to do or that he is unable to solve the task himself. We assume that the latter is not the case, i.e.~the programmer knows a correct algorithm. 
    
    High-level programming languages with automatic memory management have significantly facilitated the task of expressing what we want a computer to do without having to be aware of its architecture. However, the challenge of translating a mental representation of an algorithm into code remains.
    While such programming languages make it possible that the code only needs to reflect the essentials of an algorithm and nothing else, they obviously cannot simplify this task any further.
    Thus, a method that guides this translation process is desirable. Existing formal programming methods fail to support this process. In fact, they complicate it only further by requiring the programmer to justify why the algorithm to be implemented is correct while translating it into code.     
    In contrast, we start out with the assumption that the algorithm to be implemented is correct and only focus on ensuring that the code faithfully implements the algorithm.     
    
    We describe a programming method for systematically implementing sequential algorithms in any imperative or functional programming language. The method is based on the premise that it is easy to write down how an algorithm proceeds on a concrete input.
    This information---which we call execution trace---is used as a starting point to derive the desired program. In contrast to test-driven development the program is directly constructed from the test cases instead of written separately. The program's operations, control flow and predicates guiding the control flow  are worked out separately, which saves the programmer from having to think about them simultaneously. This reduces the likelihood of introducing errors. We demonstrate our method for two examples and discuss its utility.  The method only requires pen and paper. However, a computer can facilitate its usage by taking care of certain trivial tasks. 
              
    Additionally, we provide a formal framework for programming in terms of execution traces. We consider the central concern of any such programming method: how to derive a program from a set of traces? We prove that this problem is equivalent to solving a certain coloring problem. In general, solving this problem by hand is impractical. Our method side steps this issue by leveraging the programmer's intuition. 
\end{abstract}

\begin{keywords}
    programming method, model of computation, inductive programming, program synthesis  
\end{keywords}

\section{Introduction}
We consider programming to be the thought process in which an algorithm is implemented as a program in some programming language. In practice, it is often conducted in an ad hoc fashion which can be summarized as trial and error: write some code, execute it to verify whether it behaves correctly, if a bug is detected modify the code and repeat this process until the code is believed to work as intended. Is there a more systematic alternative to ad hoc programming?


The issue with ad hoc programming is its high potential of producing incorrect programs.
Consequently, it is important to verify a program's correctness, which can be done by testing or finding a  correctness proof. However, the inherent problem with verifying a program's correctness after its construction is that it does nothing to facilitate the actual act of programming \cite{dijk1}.

Invariant-based programming \cite{back} and matrix code \cite{emdenmc,emdencbc} are methods where a program and its correctness proof are constructed hand in hand and thus constitute systematic alternatives to ad hoc programming. Both methods share the same understanding of what it means for a program to be correct. 
But what does it mean for a program to be correct?
  Suppose you want to sort an array of integers. It is plausible to call any program which accomplishes this correct. There might be an additional efficiency constraint such as using only $o(n^2)$ comparisons. In that case, a program which implements bubble sort would not be correct. Suppose you want to use quick sort. Then only the programs which implement quick sort would be correct.   
These are three distinct notions of correctness. In the first one, a program's correctness is specified in terms of its input/output-behavior. In the second one, constraints on a program's execution behavior are imposed. In the third one, the execution traces produced by a program must match a specific pattern that is recognized as execution of a certain algorithm. Both methods assume the first notion of correctness, which has a severe disadvantage: its generality.

Consider the program which gets a planar graph as input and always returns true. Prove that this program correctly solves the problem of deciding whether a planar graph is 4-colorable. Proving the program's correctness is equivalent to proving the four color theorem. 
More generally, proving an arbitrary mathematical statement can be reduced to verifying a program's correctness. Thus, any programming method which claims to produce correctness proofs w.r.t.~input/output-behavior must explain how to prove theorems in general. If it does not explain how to do this then how much does it really help in proving a program's correctness other than imposing some formalism? The underlying issue is the conflation of algorithm design and programming. 

Let $X$ be an algorithmic problem and let $C$ be a set of constraints such as time and space complexity requirements. We call the task of devising an algorithm $A$ which solves $X$ while satisfying $C$ and a proof thereof algorithm design. Implementing $A$ as a program $P$ in a given programming language is what we call programming, as stated in the beginning. We distinguish between the correctness of $A$ and $P$. We call $P$ correct if it faithfully implements $A$ (the third notion of correctness). In this sense, proving the four color theorem is \emph{not} part of programming but part of algorithm design. Any programming method which includes a correctness proof of the program's underlying algorithm mixes these two activities. 

For our method the third notion of correctness is assumed: a program is correct if it implements a given algorithm. But how is an algorithm given if not as program? We assume that the programmer possesses a mental representation of it. More specifically, we say somebody has a mental representation of an algorithm if they can execute it step by step for any concrete input. A recording of such an execution is called an execution trace. 
For example, to check whether a child possesses a mental representation of long multiplication, it can be asked to multiply various pairs of numbers on squared paper. 

We assume that the algorithm to be implemented is correct. This is not an assumption of our method in particular but of any programming method which considers algorithm design to be a distinct activity from programming. Proving an algorithm's correctness is part of algorithm design and therefore should be conducted before programming. 
As illustrated by the four-color theorem example, this can be a challenging mathematical task. 
However, many problems faced in practice are algorithmically trivial, i.e.~an algorithm which solves them is immediately obvious. For example, deciding whether a string contains another one as substring is algorithmically trivial. Formally proving the correctness of such an obvious algorithm is like cracking a nut with a sledgehammer. 
We suppose that often programmers do not struggle to find a correct algorithm for the problem at hand but to write a program which correctly implements this algorithm. Our method is aimed at overcoming this difficulty.

\subparagraph*{Outline of the method.}
The execution trace of an algorithm can be seen as a table where each column corresponds to a variable. The $i$-th row contains the values of the variables after $i$ execution steps of the algorithm for a certain input. The first step is to determine the set of variables used by the algorithm. Then an input is chosen and an execution trace for this input is written down. The next step is to generalize the literals in this table. More specifically, one has to determine how each value in row $i+1$ can be expressed in terms of the values in row $i$. After removing the literals from the table, the sequence of expressions that is left in each row corresponds to an operation executed by the algorithm. Certain rows correspond to the same operation. 
This information can be used to derive the control flow graph of the desired program. 
Assume the $i$-th row corresponds to the operation $\alpha(i)$. Then the graph has a vertex for each operation and for each $i$ there is an edge from $\alpha(i)$ to $\alpha(i+1)$. Stated differently, the sequence of operations in the table describes a path through the control flow graph. By repeating the previous steps for various inputs one will eventually arrive at the complete control flow graph of the program. Finally, it remains to determine the edge predicates, i.e.~under what circumstances does the program move from one program state to the next. 

The method works in such a way that the constructed program is consistent with all execution traces that were used to build it.
A one-to-one correspondence between program states and operations is assumed. Intuitively, this means each line of code in a program must be unique. 
While this is not necessarily the case, this can always be achieved by adding a state variable. 

\subparagraph*{Related work.}

Biermann and his colleagues explored the idea of deriving a program from execution traces in the 70's \cite{biertm,bierspeed,bier}. The goal was to simplify programming by delegating part of the intellectual burden to the computer: the programmer enters some execution traces and the computer finds the desired program. A similar approach is taken in \cite{ek} where only the control flow of the program is synthesized from traces.
In contrast, our method's purpose is to support the programmer's thought process and it does not require a computer to be effective.

\subparagraph*{Paper overview.}
In Section~\ref{sec:ff}, we describe a general formal framework for programming in terms of traces. 
We consider the synthesis problem (constructing a program from a set of traces) and show that the main obstacle is to determine what rows of the traces can be mapped to the same program state; this can be encoded as a coloring problem. We also explain how this framework can be related to imperative and functional programming languages. If one is primarily interested in the method then this section can be skipped.
In Section~\ref{sec:md}, we demonstrate our method for two examples and explain how it relates to the framework from Section~\ref{sec:ff}.  
The first example is a string matching problem where the algorithm is quite simple. The second example is to decide whether two rooted trees are isomorphic using only a logarithmic amount of space. In the first part we describe Lindell's algorithm \cite{lindell} which accomplishes this. Implementing this algorithm can be challenging, even for a seasoned programmer. Therefore it is a good example to showcase the strengths of our method.
In Section~\ref{sec:con}, we recapitulate the advantages of our method compared to the ad hoc approach.

\section{Trace-Based Programming}
\label{sec:ff}
This section is intended to provide a formal framework for programming methods where a program is built from traces. To present our method it would suffice to use the definition of a trace from the introduction: a description of how the values of the variables change after every execution step of a program for some input. However, a complete formalization would still require a definition of program. Instead, we define a notion of program first from which a more general definition of traces naturally follows. The advantage of this more general definition is that the considerations within this section are applicable beyond the scope of our method. In particular, we consider how to compute a small program from a set of traces and show that this problem is equivalent to finding a restricted coloring of a certain graph defined in terms of the traces. Consequently, any trace-based programming method must address how to obtain such a coloring. Additionally, we use this framework to relate our method to other ideas and areas such as automata theory.

A program is a directed graph which represents the control flow. Each vertex is associated with an operation and each edge is associated with a predicate (see Figure~\ref{fig:double}). When the program reaches a certain vertex the associated operation is executed and then the outgoing edge whose predicate is true dictates what vertex to visit next; we require that there can only be one such edge, i.e.~programs are deterministic. Execution ends when a vertex is reached such that no outgoing edge exists whose  predicate is true.	The possible operations and predicates are determined by what we call a model of computation. For example, the Turing machine is a model of computation and its operations are moving the tape head to the left or right and overwriting the current cell with a symbol from a given alphabet. Instruction set architectures such as x86 or MIPS are also models of computation. 

In the first subsection we describe the three elementary notions of our framework---models of computation, programs and traces---and give some examples. In the second subsection we characterize under what circumstances a set of traces admits a small program. Moreover, we show that any program can be represented as a finite set of traces. This implies that trace-based programming is a universal form of programming. 
In the last subsection we define the class of models of computation used by our method and show how programs in our sense can be translated into an imperative or functional programming language. 

\subsection{A Meta-Model of Computation}
\label{ss:mmoc}
Abstractly, a machine can be seen as a device which has a set of buttons and a set of indicator lamps. It resides in a particular state and reveals (partial) information about its current state through the indicator lamps, which are either on or off. An operator can affect the machine state by pushing a button. Its behavior is deterministic, i.e.~pushing a particular button in a particular state always has the same effect. The machine only changes its state when a button is pushed. The following definition formalizes this.

\begin{definition}
	A model of computation $\compmodel$ is a triple $(\cmspace,\opset,\predset)$ where 
	$\cmspace$ is a countable set, $\opset$ is a finite set of functions from $\cmspace$ to $\cmspace$ and $\predset$ is a finite set of functions from $\cmspace$ to $\{0,1\}$. The set $\cmspace$ is called state space of $\compmodel$ and an element $s$ of $\cmspace$ is a state of $\compmodel$; if $\compmodel$ is clear from the context we call $s$ a machine state. A function from $\opset$ is called an operation of $\compmodel$ and a function from $\predset$ is called a predicate of $\compmodel$. 
\end{definition}

We require the state space to be countable in order to guarantee that every machine state has a finite representation as string.
Operations and predicates need not be computable and thus certain models of computations might not be effective in the sense that they cannot be simulated by a Turing machine.

\begin{example}[$k$-counter machine]
	Let $k \in \N$. The model of computation $\cmcm$ is defined as follows. It has $\No^k$ as state space and for every $i \in [k]$ it has the operations:	
	\begin{align*}
	\text{`}\mathrm{R}i+1\text{'} &:= (x_1,\dots,x_k) \mapsto (x_1,\dots,x_{i-1},x_i+1,x_{i+1},\dots,x_k) \\
	\text{`}\mathrm{R}i-1\text{'} &:=  (x_1,\dots,x_k) \mapsto (x_1,\dots,x_{i-1},x_i-1,x_{i+1},\dots,x_k)
	\end{align*}
	where $x_i - 1$ is defined as $0$ if $x_i = 0$. For every $i \in [k]$ it has the predicate:
	$$  \text{`}\mathrm{R}i=0\text{'} := (x_1,\dots,x_k) \mapsto 1 \Leftrightarrow x_i = 0  $$		
\end{example}

\begin{definition}
	Let $\compmodel$ be a model of computation. 
	The state space graph $G(\compmodel)$ of $\compmodel$ is a directed graph defined as follows. It has the state space of $\compmodel$ as vertex set and there is an edge from $s$ to $s'$ if there exists an operation $g$ of $\compmodel$ such that $g(s) = s'$. A path $(s_1,\dots,s_n)$ in $G(\compmodel)$ with $n \geq 1$ is called a trace of $\compmodel$. 
    Let $\vec{s} = (s_0,s_1,\dots,s_n)$ be a sequence of machine states and let $\vec{g} = (g_1,\dots,g_n)$ be a sequence of operations of $\compmodel$. We say $(\vec{s},\vec{g})$ is an extended trace if $g_i(s_{i-1}) = s_i$ holds for all $i \in [n]$. The length of an extended trace $(\vec{s},\vec{g})$ is $|(\vec{s},\vec{g})| = |\vec{g}|$.
\end{definition}

\begin{example}
	The sequence $\vec{s} = ((1,2),(2,2),(3,2),(3,1))$ is a trace of  $\cmcm[2]$ whereas neither $((0,1),(2,1))$ nor $((3,4),(3,4))$ are since there is no operation which increments by two and there is no operation which leaves $(3,4)$ unchanged. The tuple $(\vec{s},\vec{g})$ is an extended trace where $\vec{g} = (\mathrm{R1}+1,\mathrm{R1}+1,\mathrm{R2}-1)$.
    The tuple $(((7,6)),())$ is also an extended trace. 
\end{example}

\begin{definition}
	Let $\compmodel$ be a model of computation. An $\compmodel$-program $P$ is a tuple $(G,v_0,\alpha,\beta)$ where $G$ is a directed graph, $v_0$ is a vertex of $G$ with in-degree 0, $\alpha$ is a function which maps every vertex of $G$ except $v_0$ to an operation of $\compmodel$ and $\beta$ is a function which maps every edge of $G$ to a $k$-ary boolean function where $k$ is the number of predicates of $\compmodel$. The graph $G$ is called control flow graph (CFG) of $P$ and the vertices of $G$ are called states of $P$ or program states. The vertex $v_0$ is called start state. For an edge $(u,v)$ of $G$ the boolean function $\beta(u,v)$ is called edge predicate.
\end{definition}

\begin{definition}
	Let $\compmodel$ be a model of computation. For an $\compmodel$-program $P$, a program state $v$ and a machine state $s$ we write $P(s,v)$ to denote the trace produced when executing $P$ starting from program state $v$ with machine state $s$. We write $P[s,v]$ to denote the sequence of operations that occurs during this execution. We write $P(s)$ and $P[s]$ to denote $P(s,v_0)$ and $P[s,v_0]$ where $v_0$ is the start state of $P$. If the execution is ambiguous, i.e.~at some point during execution it is not clear which state to visit next because more than one of the edge predicates is true, then these expressions are undefined.  
    We say $P$ terminates on $s$ if $P(s)$ is finite. 
\end{definition}

\begin{figure}
    \begin{center}
        \begin{tikzpicture}[shorten >=1pt,auto,node distance=1.2cm,
main node/.style={draw}]


\newcommand*{\xlen}{1cm}%
\newcommand*{\ylen}{0.6cm}%

\newcommand*{\xtlen}{1cm}%
\newcommand*{\ytlen}{0.4cm}%

\node[] (GHOST) at (-1.1,0) {};
\node[main node] (INIT) at (0,0) {Start};

\node[main node, above right = \ylen and 3*\xlen of INIT] (A1) {$\mathrm{R2}-1$};
\node[main node, below right = \ylen and 3*\xlen of INIT] (A3) {$\mathrm{R1}-1$};
\node[main node, right = \xlen of A3] (A4) {$\mathrm{R2}+1$};
\node[main node, right = \xlen of A4] (A5) {$\mathrm{R2}+1$};

\path[->]
(GHOST) edge (INIT)
(INIT) edge node[midway, above, sloped]{\footnotesize $\mathrm{R2}>0$} (A1)
(INIT) edge node[midway, above, sloped]{\footnotesize $\mathrm{R2}=0 \: \& \: \mathrm{R1}>0$} (A3)
(A1) edge node[midway]{\footnotesize $\mathrm{R2}=0 \: \& \: \mathrm{R1}>0$} (A3)
(A3) edge node[midway]{\footnotesize $\top$} (A4)
(A4) edge node[midway]{\footnotesize $\top$} (A5)

(A1) edge[loop] node[midway, above, sloped]{\footnotesize $\mathrm{R2}>0$} (A1)
(A5) edge[bend left=80] node[midway, above, sloped]{\footnotesize $\mathrm{R1} > 0$} (A3)
;

\end{tikzpicture}
    \end{center}
    \caption{2-counter machine program which doubles the value of register 1}
    \label{fig:double}	
\end{figure}
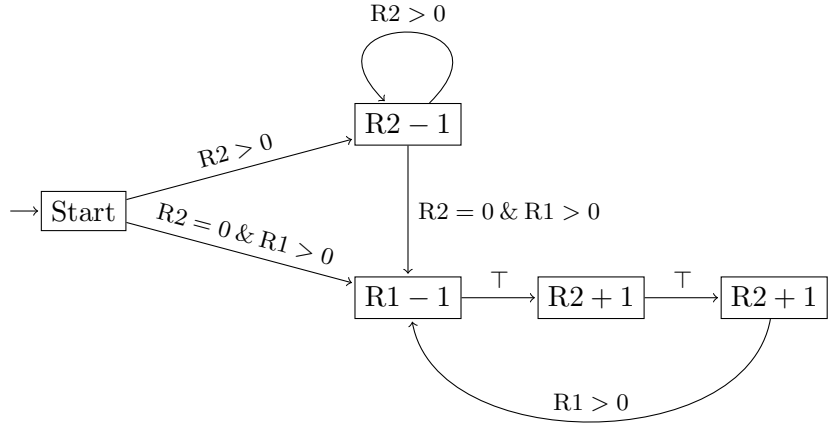

\begin{example}
	Figure~\ref{fig:double} shows a $\cmcm[2]$-program which clears register 2 and then doubles register 1 and writes the result to register 2. The node labeled `Start' corresponds to $v_0$; its ingoing edge is just decorative, i.e.~it does not appear in the CFG. The symbol $\top$ represents the boolean function which is always true.  For example, on input $(2,1)$ the program behaves as follows:
    \begin{align*}
     P((2,1)) &= ((2,1),(2,0),(1,0),(1,1),(1,2),(0,2),(0,3),(0,4)) \\
     P[(2,1)] &= (\mathrm{R2}-1,\mathrm{R1}-1,\mathrm{R2}+1,\mathrm{R2}+1,\mathrm{R1}-1,\mathrm{R2}+1,\mathrm{R2}+1) 
    \end{align*}
    Let $v$ be the program state with operation $\mathrm{R2}-1$.
    The sequences $P((2,0),v)$ and $P[(2,0),v]$ are the same as $P((2,1))$ and $P[(2,1)]$ minus the first element. 
\end{example}

\begin{definition}
    Let $\compmodel$ be a model of computation and let $P,P'$ be $\compmodel$-programs. We say $P$ and $P'$ are equivalent if $P(s) = P'(s)$ holds for all machine states $s$. 
\end{definition}

To illustrate this notion of program equivalence, reconsider the example of a machine $\compmodel$ with buttons and indicator lamps. Suppose you can observe the inner workings of $\compmodel$ and thus also know in what (machine) state it resides in at any moment. Your task is to discern two operators $P,P'$. To accomplish this, you are allowed to set $\compmodel$ to an initial state and ask each operator to execute their routine. During execution you can record how the machine state changes after every action by the operator. This sequence is a trace. You can only distinguish the two operators if their traces for some initial machine state differs. Notice that the operators might push different buttons at a certain point, but if both buttons result in the same subsequent machine state you will not notice this.

\begin{example}
    Consider the program $P$ shown in Figure~\ref{fig:double}. Let $P'$ be the program which results from $P$ by adding an additional program state $x$ with operation R1-1 and reroute the outgoing edge from the rightmost program state to $x$. Additionally, add an edge from $x$ to the ``left R2+1'' with edge predicate $\top$. The programs $P$ and $P'$ produce the same traces and thus are equivalent.
    \label{ex:progeq}
\end{example}

\subparagraph*{Relation to automata theory and logic.}
In automata theory the models of computation that underlie finite state automata (FSA), pushdown automata (PDA) and Turing machines (TM) are defined implicitly. To illustrate what we mean by that let us consider Turing machines as an example. Informally, a TM can be described as a device that has 1) a two-way infinite tape which consists of cells and each cell can contain a symbol from the tape alphabet $\Gamma$, 2) a tape head used to read and modify the content of the cells, and 3) a finite state machine which guides its behaviour. In our terminology 1) and 2) describe the model of computation and 3) is a program. The separation of models of computation and programs is also suggested in \cite{dana}. The standard formalization of a TM as a tuple combines all three parts. 
A state in the sense of automata theory is a program state and a configuration is the combination of a machine and program state.

The model of computation for a Turing machine with tape alphabet $\Gamma$ could be defined as follows. Its state space is $\Gamma^* \times \Gamma \times \Gamma^*$ where the first (third) component describes the tape content to the left (right) of the head and the second component is the content of the current cell. The operations are moving the head to the left or right and setting the content of the current cell to $a$ for all $a \in \Gamma$. For all $a \in \Gamma$ there is a predicate which holds iff the current cell contains $a$. Observe that instead of $\Gamma^* \times \Gamma \times \Gamma^*$ we could choose $\Gamma^* \times \Gamma^* \times \Gamma$ as state space and modify operations and predicates accordingly to reflect this permutation. Even though the two models of computation are not identical, they are equivalent in the sense that both describe the same model of computation up to some immaterial difference in encoding of the state space. This notion of equivalence can be formally captured by regarding a model of computation as relational structure where the state space is the universe. Two models of computation are equivalent if they are isomorphic as relational structures. A formula $\varphi(s,t)$ over such a structure can be interpreted as a program. It describes how to move from one machine state to the next and terminates when for a given $s$ there exists no $t$ such that $\varphi(s,t)$ holds. For example, the following propositional formula over $\cmcm[2]$ adds the double of the first register to the second and sets the first register to zero in the process: 
$$ \Big( \neg \text{`R1=0'}(s) \rightarrow t =  \text{`R2+1'}(\text{`R2+1'}(\text{`R1-1'}(s))) \Big) \wedge   \text{`R1=0'}(s) \rightarrow \bot  $$
A similar way of using formulas to represent programs is used in \cite{lamport}.
An algorithm can be represented as a model of computation and a program in this model.
A related way of representing algorithms are abstract state machines \cite{gurevich,gurevich2}.

\subsection{Consistency}

A set of traces $T$ is consistent if there exists a program $P$ which produces all traces in $T$. We also say $P$ witnesses that $T$ is consistent. 
We call $T$ $k$-consistent if $P$ has at most $k$ states. We show how to find a small program which produces a given set of traces and that every program can be specified by a finite subset of its traces.

A fixed model of computation $\compmodel$ is assumed in the following statements. The full proofs of this section can be found in the appendix.

\begin{definition}
    Let $k \in \N$. A set of traces $T$ is ($k$-)consistent if there exists a program $P$ (with at most $k$ states) such that $P(s_0) = \vec{s}$ 
    holds for all traces $\vec{s} = (s_0,\dots)$ in $T$.
\end{definition}

\begin{definition}
	Let $\{f_1,\dots,f_k\}$ be the set of predicates of $\compmodel$. The predicate word $\predseq(s)$ of a machine state $s$ is the binary string $f_1(s)\dots f_k(s)$. 
	For two machine states $s,s'$ we say they are indistinguishable, in symbols $s \sim s'$, if  $\predseq(s) = \predseq(s')$.
	Let $\vec{s} = (s_0,\dots), \vec{t} = (t_0,\dots)$ be traces. We say $\vec{s} \sim_0 \vec{t}$ if $s_0 \sim t_0$. 
	For $\vec{s} = (s_0,s_1,\dots,s_n)$ let $(\vec{s})' = (s_1,\dots,s_n)$.
\end{definition}

Stated differently, two machine states $s,s'$ are indistinguishable iff there exists no predicate $f$ such that $f(s) \neq f(s')$. The next action of a program solely depends on its current program state and the predicate word of its current machine state.   

\begin{proposition}
	A set of traces $T$ is consistent iff for all $T'$ in the quotient set $T / \sim_0$ it holds that
    \underline{either} every trace in $T'$ has one element \underline{or} there exists an operation $g$ such that for all traces $\vec{s} = (s_0,s_1,\dots)$ in $T'$ it holds that $g(s_0) = s_1$ and $|\vec{s}| > 1$  and $\left\{ (\vec{s})' \mid \vec{s} \in T' \right\}$ is consistent.
    \label{prop:ttopt}
\end{proposition}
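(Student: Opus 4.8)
The plan is to prove both directions by unfolding the semantics of the \emph{first} execution step of a program and observing that this step is governed entirely by the predicate word of the initial machine state. The key fact is that when a program $P$ is run from its start state $v_0$ on a machine state $s_0$, the state $s_0$ is recorded and then the outgoing edge of $v_0$ that is followed is the unique one whose edge predicate evaluates to $1$ on $\predseq(s_0)$; in particular this transition depends on $s_0$ only through $\predseq(s_0)$, and the operation of the start vertex is never applied (the first operation applied is that of the next visited vertex). Consequently two traces $\vec{s},\vec{t}$ with $s_0 \sim t_0$, i.e.\ lying in the same class $T'$ of $T/\sim_0$, are treated identically by this first step: \emph{either} $v_0$ has no satisfied outgoing edge and $P$ halts at once, \emph{or} it follows a fixed edge to a vertex $v_1$ and applies the single operation $g := \alpha(v_1)$.

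For the forward direction, I would suppose $P$ witnesses that $T$ is consistent and fix a class $T' \in T/\sim_0$ with common predicate word $w$. If $v_0$ has no outgoing edge satisfied by $w$, then $P(s_0) = (s_0)$ for every trace in $T'$, so every such trace has one element, giving the first alternative. Otherwise $v_0$ has a unique satisfied edge to some $v_1$ with $g = \alpha(v_1)$; then $P(s_0)$ has length at least two with second element $g(s_0)$, so every $\vec{s} = (s_0,s_1,\dots) \in T'$ satisfies $|\vec{s}| > 1$ and $g(s_0) = s_1$. It then remains to witness that $\{(\vec{s})' \mid \vec{s}\in T'\}$ is consistent: I obtain a program $P_1$ from $P$ by introducing a fresh operation-free start vertex $v_0'$ of in-degree $0$ that copies all outgoing edges of $v_1$; since a start vertex's operation is never applied, $P_1(s_1) = P(s_1,v_1) = (\vec{s})'$ for each tail.

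For the backward direction, I would assemble a witnessing program by splicing together the tail-witnesses class by class. A class in the first alternative contributes nothing. For a class $T'$ in the second alternative, with word $w$, operation $g$, and tail-witness $P_{T'}$ with start vertex $u_{T'}$, I take a disjoint copy of $P_{T'}$, relabel $u_{T'}$ with the operation $g$ (harmless precisely because a start vertex's operation is never executed), and add an edge $v_0 \to u_{T'}$ whose edge predicate is the $k$-ary boolean indicator of $w$. A short semantic check shows that running this program on any $s_0$ with $\predseq(s_0)=w$ records $s_0$, moves to $u_{T'}$ while applying $g$ to obtain $s_1$, and thereafter reproduces $P_{T'}(s_1) = (\vec{s})'$, so the full trace is $\vec{s}$; classes in the first alternative yield length-one traces because $v_0$ has no satisfied edge for their words.

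The routine part is the semantic bookkeeping (recording of the initial state, the one-step lookahead, termination). The main obstacle is the backward construction: I must ensure that re-using the tail-witnesses as subprograms of a single program breaks neither determinism nor any produced trace. This reduces to two points I expect to be the crux---first, that edge predicates see a machine state only through its predicate word, so the indicator-of-$w$ predicates attached at $v_0$ are automatically mutually exclusive across classes (which have distinct words), keeping $P$ deterministic; and second, that attaching the operation $g$ to the former start vertex $u_{T'}$ is semantically invisible except for supplying exactly the single missing application that turns $s_0$ into $s_1$, which is why the spliced program reproduces $\vec{s}$ rather than $(\vec{s})'$.
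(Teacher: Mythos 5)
Your proposal is correct and follows essentially the same route as the paper's proof: both directions hinge on the observation that the first transition out of $v_0$ depends on the initial machine state only through its predicate word, the tail-witness in the forward direction is obtained by re-rooting $P$ at the vertex reached after one step, and the backward construction glues the tail-witnesses onto a fresh start vertex via indicator predicates of the classes' predicate words. The only cosmetic difference is that the paper argues the length-one/length-greater-than-one dichotomy by contradiction, whereas you derive it directly from the case split on whether $v_0$ has a satisfied outgoing edge.
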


Implicit in the ``$\Leftarrow$''-direction of the proof of Proposition~\ref{prop:ttopt} is a recursive algorithm that, given a consistent set of traces $T$, outputs a program $P_T$ which witnesses this and whose control flow graph is a tree. 

\begin{definition}
	Let $k \in \N$.	A set of extended traces $T$ is ($k$-)consistent iff there exists a program $P$ (with at most $k$ states) such that $P(s_0) = \vec{s}$ and $P[s_0] = \vec{g}$ holds for all $(\vec{s},\vec{g}) \in T$ where $s_0$ is the first element of $\vec{s}$.
\end{definition}

The problem of determining whether a set of traces $T$ is $k$-consistent is equivalent to determining whether $T$ can be transformed into a set of extended traces $T'$ such that $T'$ is $k$-consistent. 

%

\begin{definition}
	A line is a triple $(i,s,g)$ where $i \in \No$, $s$ is a machine state and $g$ is either an operation or the special symbol $\emptyset$. Let $(\vec{s},\vec{g})$ be an extended trace with $\vec{s} = (s_0,s_1,\dots,s_n), \vec{g} = (g_1,\dots,g_n)$. The $0$-th line of $(\vec{s},\vec{g})$ is defined as $(0,s_0,\emptyset)$. For $i \in [n]$ the $i$-th line of $(\vec{s},\vec{g})$ is defined as $(i,s_i,g_i)$. 
\end{definition}

\begin{definition}
	Let $X$ and $Y$ be extended traces and $n = |X|, m = |Y|$. 
	We define a binary matrix $U_{i,j}$ with $i \in \{0,\dots,n+1\}$ and $j \in \{0,\dots,m+1\}$ as follows. The matrix has the following fixed entries:
	\vspace{-0.2cm}
	\begin{center}
		\includegraphics[scale=0.3]{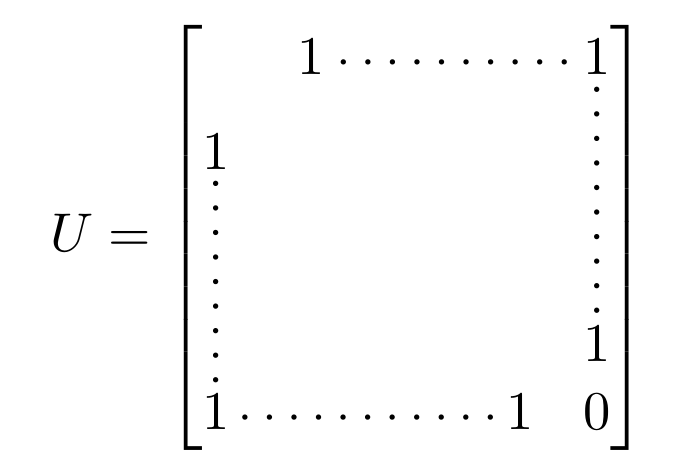}	
	\end{center}
	\vspace{-0.3cm}
	For $(i,j) \in ([n] \times [m]) \cup \{(0,0)\}$ the entry $U_{i,j}$ is defined as follows. Let $(i,s,g)$ be the $i$-th line of $X$ and let $(j,t,h)$ be the $j$-th line of $Y$. Let $U_{i,j} = 1$ iff $g \neq h$ or $s \sim t$ and $U_{i+1,j+1}=1$. For $i \in \{0,\dots,n\}$ and $j \in \{0,\dots,m\}$ we say the $i$-th line of $X$ and the $j$-th line of $Y$ are unmergeable if $U_{i,j} = 1$.
\end{definition}

The unmergeability relation intuitively means the following. Suppose you have two extended traces $X,Y$ with initial machines states $s,t$ respectively and a program $P$ which produces both. If the $i$-th line of $X$ and the $j$-th line of $Y$ are unmergeable then $P$ cannot reach the same program state when executing input $s$ for $i$ steps and $t$ for $j$ steps.  

\begin{definition}
	Let $T = \{ X_1,\dots,X_n \}$ be a set of extended traces. The undirected graph $G(T)$ is defined as follows. It has a vertex $v_{i,j}$ for every $i \in [n]$ and $j \in \{0,\dots,|X_i|\}$ and two vertices $v_{i,j}$ and $v_{i',j'}$ are adjacent if the $j$-th line of $X_i$ and the $j'$-th line of $X_{i'}$ are unmergeable.
%
	\label{def:trace_graph}
\end{definition}

\begin{theorem}
	Let $k \in \N$. A set of extended traces $T$ is $k$-consistent iff $T$ is consistent and $G(T)$ has a restricted $k$-coloring.
	\label{thm:color}
\end{theorem}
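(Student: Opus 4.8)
The plan is to set up a correspondence between $\compmodel$-programs with at most $k$ states that witness the $k$-consistency of $T$ and restricted $k$-colorings of $G(T)$, realized by the map that sends a line to the program state visited at that line. The crux is a soundness statement for the unmergeability relation: if line $j$ of $X_i$ and line $j'$ of $X_{i'}$ are unmergeable (i.e.\ $v_{i,j}$ and $v_{i',j'}$ are adjacent in $G(T)$), then no single program producing both traces can be in the same program state after $j$ steps on $X_i$ and $j'$ steps on $X_{i'}$. Once this is in place, a proper coloring of $G(T)$ is exactly an assignment of program states to lines that never conflicts, and the additional ``restricted'' conditions are what force such an assignment to arise from an actual deterministic program.

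First I would prove the forward direction. If $T$ is $k$-consistent, fix a witness $P$ with at most $k$ states; then $T$ is consistent by definition, and I define $c(v_{i,j})$ to be the program state of $P$ reached after executing $X_i$ for $j$ steps (well defined since $P$ produces $X_i$, and using at most $k$ colors). Properness is the soundness half of the unmergeability relation, which I would establish by induction mirroring the recursion defining $U_{i,j}$, i.e.\ downward from the fixed boundary entries. In the step, if the two lines carry different operations then the states cannot coincide, since $\alpha$ would have to assign two operations to one vertex (the cases involving $\emptyset$ pin the start state, which has in-degree $0$, against any post-step state); if instead the machine states are indistinguishable and the successor lines are unmergeable, then coinciding states together with determinism of $P$ would force the successor states to coincide as well, contradicting the induction hypothesis. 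The boundary entries are handled directly: a terminating line and a continuing line at indistinguishable machine states cannot share a state, since from a fixed state under a fixed predicate word the program cannot both halt and move on. Finally I would check that $c$ meets the defining conditions of a restricted coloring: all start lines receive the start state (there is a single start vertex), and indistinguishable lines of equal color have equally-colored successors (determinism of $P$).

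For the converse I would use consistency to obtain a scaffold and the coloring to fold it. By Proposition~\ref{prop:ttopt}, and the algorithm implicit in the proof of its ``$\Leftarrow$''-direction, consistency of $T$ yields a witnessing program $P_T$ whose control flow graph is a tree and whose states correspond to the lines of $T$. A restricted $k$-coloring $c$ of $G(T)$ is then read as an instruction to identify all states of $P_T$ sharing a color; I would define the quotient program $P$ to have the color classes as states, at most $k$ of them. Its start state is the class of the start lines; the operation of a non-start class is the common operation of its lines, which is well defined because properness forbids two lines of different operation (in particular the empty operation) from sharing a color; and for each observed step from a line of color $u$ at machine state $s$ to a line of color $w$, I install an edge $(u,w)$ whose predicate fires on $\predseq(s)$. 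Determinism of $P$ is exactly the closure condition of the restricted coloring: two equally-colored lines with the same predicate word have equally-colored successors, so each pair consisting of a state and a predicate word selects at most one outgoing edge; correct termination is again enforced by properness through the boundary entries, which keep a halting line and a continuing line with the same predicate word in different colors. A straightforward induction on trace length then shows $P(s_0)=\vec{s}$ and $P[s_0]=\vec{g}$ for every $(\vec{s},\vec{g})\in T$, so $P$ witnesses $k$-consistency.

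The main obstacle I anticipate is the reverse direction, where the folded program must be shown to be a legitimate deterministic program that halts in exactly the right places and reproduces each trace verbatim; this is where the three ingredients interact and must be reconciled, namely properness (distinctness of states and consistency of termination), the closure condition of the restricted coloring (determinism of transitions), and plain consistency of $T$ (existence of the single start state and of the scaffold $P_T$). A secondary but delicate point is making the induction for unmergeability soundness line up with the recursive definition of $U_{i,j}$, in particular treating the fixed boundary entries (the past-the-end rows and columns) correctly, so that the termination-versus-continuation case is actually covered rather than quietly assumed.
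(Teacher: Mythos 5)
Your proposal is correct and follows essentially the same route as the paper: the forward direction defines the coloring by the program state reached after $j$ steps and refutes unmergeability of equally-colored lines by unwinding the recursion defining $U_{i,j}$ down to the boundary (termination-versus-continuation) cases, and the reverse direction builds the program whose states are the color classes, with operations given by the common operation of a class, edges installed for observed consecutive lines, predicates firing on the observed predicate words, and determinism supplied by the closure condition of the restricted coloring. Your framing of the converse as a quotient of the tree-shaped scaffold from Proposition~\ref{prop:ttopt} is only a presentational variant of the paper's direct construction, and like the paper you implicitly assume all start lines can be given a single color.
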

\begin{proof}[Proof idea]
	A program $P$ which shows that $T$ is $k$-consistent corresponds to a (restricted) $k$-coloring of $G(T)$: associate each line of every trace in $T$ with the program state of $P$ reached during execution. This means every program state corresponds to a color. This correspondence can also be used to construct a program from a $k$-coloring. However, an arbitrary coloring might yield a non-deterministic program. Thus, a certain restriction on admissible colorings must be made to ensure that the resulting program is deterministic, see Definition~\ref{def:crestr}.
\end{proof}

In summary, finding a small program which produces a given set of traces $T$ is equivalent to finding an extended version $T'$ of $T$ and a small restricted coloring of $G(T')$. Any algorithm which computes a small program for an extended set of traces $T$ must do so by (implicitly) computing a restricted coloring of $G(T')$. 

\begin{proposition}
	Any program can be represented as a finite set of traces up to equivalence. More formally, for every program $P$ there exists a finite set of traces $T$ such that $P$ is equivalent to $P'$ where $P'$ is a minimal (w.r.t.~program states) program which witnesses that $T$ is consistent. 
	\label{prop:ftrepr}    
\end{proposition}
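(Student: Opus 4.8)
The plan is to construct, for a given program $P$, a finite \emph{characteristic} set of traces $T$ that pins down $P$'s behaviour up to equivalence, in the spirit of characteristic samples for automaton identification. First I would replace $P$ by a minimal equivalent program $P^*$, i.e.~one with the fewest program states among all programs equivalent to $P$; since equivalence is transitive, it suffices to find $T$ whose minimal consistent program is equivalent to $P^*$. Every trace placed in $T$ will be a finite prefix of a genuine execution trace $P^*(a)$ for a suitable input $a$, so that $P^*$ itself witnesses that $T$ is consistent. Consequently any minimal consistent program $P'$ has at most $|P^*|$ states, and the whole problem reduces to showing that every program with at most $|P^*|$ states that is consistent with $T$ is already equivalent to $P^*$.

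To build $T$ I would exploit that $P^*$ has finitely many program states and that there are only finitely many predicate words $\predseq(s)$; hence $P^*$ makes only finitely many distinct control decisions. For each reachable program state $v$ of $P^*$ I include an \emph{access} trace reaching $v$. For each state $v$ and each predicate word $w$ that actually occurs at $v$ during some execution I include a \emph{transition} trace that, at some position, has $P^*$ in state $v$ with a current machine state of predicate word $w$; and---crucially, because the operations of $\compmodel$ form an arbitrary but finite set $\opset$---for every alternative operation $g \in \opset$ that differs from the operation $P^*$ applies in this situation on some machine state reachable there, I include a trace whose relevant line sits at a machine state where the two operations disagree. Finally, for every pair of distinct states $u \neq v$ of $P^*$ I include a pair of trace positions that \emph{distinguish} them; minimality of $P^*$ guarantees such distinguishers exist among \emph{reachable} configurations, for otherwise $u$ and $v$ could be merged into a smaller equivalent program. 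Non-terminating executions are truncated: each trace need only be long enough to expose the single control decision or distinguishing behaviour it is responsible for, which happens within a number of steps bounded by the number of configurations, so finite prefixes suffice.

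With $T$ in hand, let $P'$ be any program with at most $|P^*|$ states that is consistent with $T$. The distinguishing positions form a set of pairwise \emph{unmergeable} lines, so by the coloring correspondence of Theorem~\ref{thm:color} any consistent program must place them in pairwise distinct program states; as there are $|P^*|$ of them and $P'$ has at most $|P^*|$ states, this yields a bijection between the states of $P'$ and those of $P^*$. I would then verify that this bijection is a bisimulation: the access and transition traces force corresponding states to carry the same operation (up to operations that agree on all reachable machine states, which may be substituted freely without affecting any trace) and to branch identically on every predicate word that can actually arise. An induction on the number of execution steps then shows that $P'$ and $P^*$ produce identical machine-state sequences on \emph{every} input, i.e.~$P' \equiv P^* \equiv P$, completing the proof.

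The main obstacle is exactly this last extension from the finite sample to all inputs. Two features of the framework make it delicate. First, predicate words are dictated by the current machine state rather than chosen freely, so I must argue that the finitely many traces already exercise every control decision that \emph{any} input could ever trigger; this is where I lean on reachability and on minimality of $P^*$ to supply reachable distinguishers. Second, operations are arbitrary functions drawn from the finite set $\opset$, so agreement on finitely many sampled machine states does not by itself determine the operation applied---which is why $T$ must separate the correct operation from each of its finitely many alternatives on a reachable machine state, and why alternatives that are indistinguishable on all reachable states must be shown to leave the program equivalent. Verifying that the characteristic set genuinely captures all of this, while keeping it finite in the presence of non-terminating executions, is the technical heart of the argument.
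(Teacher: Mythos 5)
Your proposal takes a genuinely different route---an Angluin/Gold-style \emph{characteristic sample} built from access traces, transition witnesses, operation-separating witnesses and state distinguishers---but as written it has concrete gaps, and you yourself flag the central one: the lift from ``$P'$ agrees with $P^*$ on the finite sample'' to ``$P'$ is equivalent to $P^*$ on every input'' is left as an acknowledged obstacle rather than proved. That step is not routine here. Forcing the bijection between the states of $P'$ and of $P^*$ requires exhibiting $|P^*|$ pairwise unmergeable lines among \emph{reachable} configurations, and your justification (``otherwise $u$ and $v$ could be merged into a smaller equivalent program'') is a Myhill--Nerode-style claim that needs an actual argument in this setting: unmergeability is a recursive condition on lines of \emph{extended} traces, two states with the same operation need not be separated by any finite trace even when merging them would break determinism or equivalence on some unseen input, and Theorem~\ref{thm:color} is stated for extended traces while the proposition concerns plain ones. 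A second, sharper problem is your treatment of non-termination: by the paper's definition a program witnesses that $T$ is consistent only if $P(s_0)$ \emph{equals} each trace of $T$, so a proper prefix of a non-terminating run $P^*(a)$ is \emph{not} witnessed by $P^*$; truncation therefore invalidates the very premise (``$P^*$ itself witnesses that $T$ is consistent'') on which your bound $|P'| \leq |P^*|$ rests.

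The idea you are missing is that none of this machinery is needed, because the hypothesis class is finite. Since $\opset$ and $\predset$ are finite, there are only finitely many programs with at most $k$ states at all (finitely many control flow graphs on $k$ vertices, finitely many choices of $\alpha$, finitely many $|\predset|$-ary boolean edge predicates), hence only finitely many inequivalent behaviors. The paper's proof simply takes $P$ minimal with $k$ states, picks for each inequivalent pair of $k$-state programs one machine state on which they differ, collects these into a finite set $I_k$, and sets $T = \set{P(s)}{s \in I_k}$. Any minimal witness $P'$ of the consistency of $T$ has at most $k$ states and satisfies $P'(s) = P(s)$ for all $s \in I_k$, so by the defining property of $I_k$ it must be equivalent to $P$. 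No access strings, no bisimulation, no analysis of $P$'s internal structure. What your approach would buy, if completed, is an explicit and potentially much smaller sample tied to the structure of $P^*$; what it costs is exactly the delicate lifting argument you left open, which the counting argument sidesteps entirely.
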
  

The size required to express a program solely in terms of its traces can be quite large. A simple way to decrease this size is to use extended traces. But even then the number of traces required to express a simple program might quickly become impractical. For example, consider a model of computation with $p$ predicates. Suppose you want to specify a program which goes from start state $v_0$ to $v_1$ if a certain predicate holds and to $v_2$ if that predicate does not hold. Specifying this behavior might already require $2^p$ (extended) traces.
Traces can be extended as follows to resolve this.
Let $(\vec{s},\vec{g})$ be an extended trace with $\vec{s} = (s_0,s_1,\dots,s_n)$. Add $\vec{F} = (F_1,\dots,F_n)$ to the extended trace where $F_i$ is a subset of predicates with the following meaning. One goes from $s_i$ to $s_{i+1}$ solely based on the truth values of the predicates in $F_{i+1}$, i.e.~the predicates outside of $F_{i+1}$ are irrelevant. With this kind of information it only requires two traces to express the previously described behavior. This might yield acceptable specification sizes in practice and has been used in \cite{bier} as basis for specifying programs. 

Our method diverges from the concept of specifying a program completely in terms of its traces. Instead, we only use extended traces to obtain the control flow graph of the program. The edge predicates are added separately.

\subsection{Relation to Programming Languages}
A basic unit of abstraction in imperative and functional languages are functions.
We explain how a function can be regarded as a program in an implicit model of computation.
This makes it clear how trace-based programming can be used in conjunction with such languages. 

A function $\texttt{f}$ consists of a declaration and a body.
The declaration defines $\texttt{f}$'s return type and parameters, the body defines its behavior. 
In an imperative language $\texttt{f}$ can modify the value of variables that are within its scope using assignments.  
An assignment consists of the variable to be modified and an expression. An expression is a composition of built-in and self-defined functions. 
The body of $\texttt{f}$ consists of variable declarations, assignments, control structures and other statements with side-effects (print, open a socket, \dots). Let us assume $\texttt{f}$ is pure. This means its body does not contain statements of the last category.  
The model of computation $\compmodel(\texttt{f})$ implicitly described by  $\texttt{f}$ looks as follows. A machine state describes the value of each variable that is within the scope of $\texttt{f}$. Each assignment in  $\texttt{f}$ is an operation and each boolean expression is a predicate. The body of $\texttt{f}$ minus the variable declarations can be expressed as $\compmodel(\texttt{f})$-program. 
The purpose of our method is to systematically construct the body of $\texttt{f}$.

In a purely functional language there are no assignments, and control structures are modeled as functions instead of statements. For example, $\texttt{if}(x,y,z)$ is a function which evaluates to $y$ if the boolean expression $x$ holds and $z$ otherwise. The body of a function is just an expression. 
Let us consider how an $\compmodel$-program $P$ can be implemented in such a language. First, $\compmodel$ has to be implemented. This can be done by implementing each of its operations and predicates as function. 
To implement $P$, define a function $\texttt{f}_v \colon S \rightarrow S$ for each program state $v$ where $S$ is the state space. The function $\texttt{f}_v$ `executes' the operation associated with $v$ and then calls the function of the next program state. 
For example, consider the program in Figure~\ref{fig:double}. Let $v$ be the state with operation $\mathrm{R2-1}$. Then $\texttt{f}_v(s)$ is defined by the following expression:
\begin{center}
	\begin{tikzpicture}[sibling distance=20pt]
\Tree [.{\texttt{if}} [.{$\mathrm{R2} > 0$} {$s'$} ]
[.{$\texttt{f}_v$} {$s'$} ] 
[.{\texttt{if}} 
	[.{$\wedge$} 
		[.{$\mathrm{R2} = 0$} {$s'$} ]	
		[.{$\mathrm{R1} > 0$} {$s'$} ]
	]			
	[.{$\texttt{f}_w$} {$s'$} ]
	[.{$s'$} ]
]
]
\end{tikzpicture}

\end{center}
where $s' = \text{`}\mathrm{R2-1}\text{'}(s)$ and $w$ is the program state with operation $\mathrm{R1-1}$. In an imperative language a program can be implemented as a series of code fragments $\texttt{S}_v$ for each program state $v$:
\begin{align*}
	\texttt{S}_v : \: \: &  s \leftarrow \text{`}\mathrm{R2-1}\text{'}(s) \\
	& \text{if } \text{`}\mathrm{R2>0}\text{'}(s) \text{ then goto } \texttt{S}_v \\ 
	& \text{if } \text{`}\mathrm{R2=0}\text{'}(s) \wedge \text{`}\mathrm{R1>0}\text{'}(s) \text{ then goto } \texttt{S}_w \\
	& \text{goto } \texttt{END} \\
\end{align*}
Both representations of $P$ carry exactly the same information as $P$ itself. 
Consequently, no meaningful distinction between `imperative' and `functional' programs can be made in the context of pure functions.
More specifically, it is impossible to distinguish between the functional and imperative translation of $P$ without referencing representational artifacts. 
Thus, programming in a functional language is not significantly different from doing it in an imperative one.

\subparagraph*{Virtual Machines.} 
Our programming method uses the following class of models of computation.
Consider the following variable and function declarations:
$$\mathrm{int} \: x, \mathrm{int} \: y, \mathrm{float} \: z, \mathrm{bool} \: p, \mathrm{bool} \: q   \: \: \: ; \: \: \: \mathrm{DIV} \colon \mathrm{int} \times \mathrm{int} \rightarrow \mathrm{float} , \: 
\mathrm{XOR} \colon \mathrm{bool} \times \mathrm{bool} \rightarrow \mathrm{bool} $$
The assignments `$z \leftarrow \mathrm{DIV}(x,y)$' and `$p \leftarrow \mathrm{XOR}(p,q)$' are valid because the types of the variables match the function signatures. 
There are 12 valid assignments induced by the above set of variables and functions.
A parallel assignment is a subset of valid assignments such that every variable occurs at most once on the left-hand side. The empty set corresponds to the operation which leaves the state unchanged.

\begin{definition}(Virtual Machine)
	A type is a countable set. A variable consists of a unique name and a type. 
	Let $V$ be a finite set of variables and let $F$ be a finite set of functions.
	The model of computation $\cmvm$ is defined as follows. Its state space is the Cartesian product of the multiset of types that occur in $V$.
	It has the set of parallel assignments induced by $V$ and $F$ as operations.
	Every $k$-ary function from $F$  with codomain $\{0,1\}$ and matching sequence of $k$ variables from $V$ correspond to a predicate of $\cmvm$.
\end{definition}

\begin{example}
	Consider the set of variables $V = \{(x,\mathrm{int}),(y,\mathrm{int}),\dots\}$  and functions $F=\{\mathrm{DIV}, \mathrm{XOR}\}$ from before. 
	The types are defined as: $\mathrm{int} = \mathbb{Z}$, $\mathrm{float} = \mathbb{Q}$ and $\mathrm{bool} = \{0,1\}$. The state space of $\cmvm$ is $\mathbb{Z}^2 \times \mathbb{Q} \times \{0,1\}^2$. It has $5^3$ parallel assignments as operations and the two predicates $\mathrm{XOR}(p,q)$ and $\mathrm{XOR}(q,p)$.
\end{example}

In our method, an algorithm $A$ is implemented as $\cmvm$-program where $V$ is the set of variables used by $A$, and $F$ is the set of functions assumed to be available such as built-in functions  of the target programming language and compositions thereof.

\section{Method Demonstration}
\label{sec:md}
We demonstrate our method by applying it to implement two algorithms and explain how it relates to the formal framework from the previous section.
The first example is a simple algorithm on strings which is used to introduce the method.
The second one is a more involved algorithm used to decide isomorphism of trees. 
We suggest that the reader tries to implement the second algorithm themselves using their preferred method in order to determine the relative utility of our method. Our method's utility is also relative to the skill level of the prorgammer. While a skilled programmer might find it trivial to implement the first algorithm (and the application of our method useless in that case), a less skilled one might already profit from its use.   

\subsection{Confidential String Matching}

\textbf{Problem description}. You are given two arrays of strings $A$ and $B$ as input where $A$ has $n$ elements and $B$ has $m$ elements. Decide whether $A[1] \dots A[n] = B[1] \dots B[m]$. 
For example, if $A=[ab,ab,a]$ and $B=[aba,ba]$ the output should be `true' since the concatenation of the strings in $A$ and $B$ both equal $ababa$. The trivial solution would be to concatenate the strings in $A$ and $B$ and check for equality. However, there is a twist. The strings in the input arrays $A$ and $B$ are confidential. In order to protect their confidentiality you are only granted indirect access. Namely, your program can only access the following information:
\begin{itemize}
	\item the number of elements in $A$ and $B$
	\item the length of the strings in $A$ and $B$ 
	\item whether $\mathrm{substr}(A[i],x,z) = \mathrm{substr}(B[j],y,z)$ (you supply the integers $i,j,x,y,z$ and get the boolean result)
\end{itemize}
where $\mathrm{substr}(s,o,l)$ denotes the substring of $s$ which starts at offset $o$ and has length $l$. Since the third operation is quite costly you shall not use more than $n+m$ of those queries for any input.

\begin{figure}[!b]
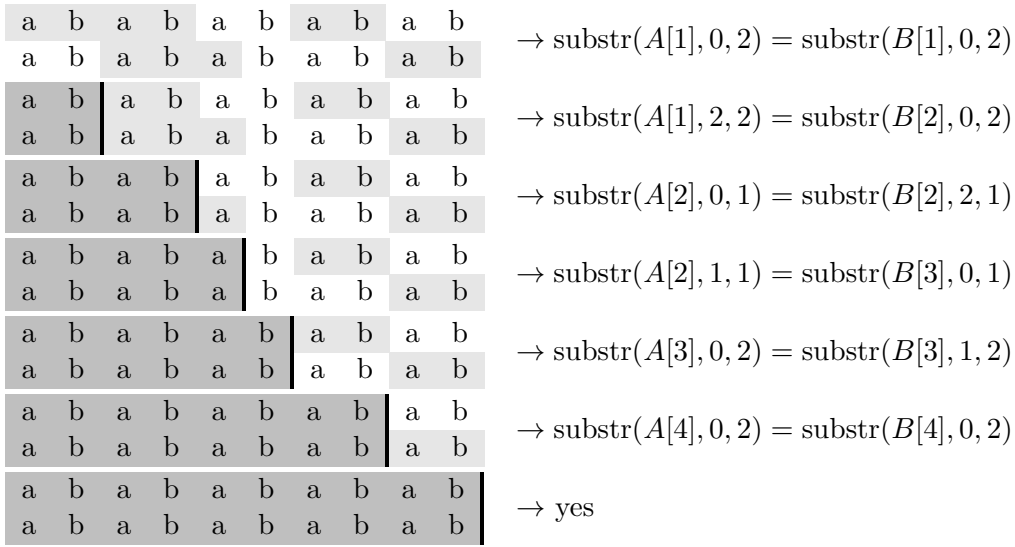

	\begin{center}
		\begin{tabular}{l l}

\begin{tabular}{ c c c c c c c c c c  }
    \cellcolor{black!10}a & \cellcolor{black!10}b & \cellcolor{black!10}a & \cellcolor{black!10}b &  a & b & \cellcolor{black!10}a & \cellcolor{black!10}b & a & b  \\
    a & b & \cellcolor{black!10}a & \cellcolor{black!10}b & \cellcolor{black!10}a & b & a & b & \cellcolor{black!10}a & \cellcolor{black!10}b  \\
\end{tabular} 
& $\rightarrow \mathrm{substr}(A[1],0,2) = \mathrm{substr}(B[1],0,2)$ \vspace{0.2em} \\

\begin{tabular}{ c c ?{0.5mm} c c c c c c c c  }
    \cellcolor{black!25}a & \cellcolor{black!25}b & \cellcolor{black!10}a & \cellcolor{black!10}b &  a & b & \cellcolor{black!10}a & \cellcolor{black!10}b & a & b  \\
    \cellcolor{black!25}a & \cellcolor{black!25}b & \cellcolor{black!10}a & \cellcolor{black!10}b & \cellcolor{black!10}a & b & a & b & \cellcolor{black!10}a & \cellcolor{black!10}b  \\
\end{tabular}
& $\rightarrow \mathrm{substr}(A[1],2,2) = \mathrm{substr}(B[2],0,2)$ \vspace{0.2em} \\

\begin{tabular}{ c c c c ?{0.5mm} c c c c c c  }
    \cellcolor{black!25}a & \cellcolor{black!25}b & \cellcolor{black!25}a & \cellcolor{black!25}b &  a & b & \cellcolor{black!10}a & \cellcolor{black!10}b & a & b  \\
    \cellcolor{black!25}a & \cellcolor{black!25}b & \cellcolor{black!25}a & \cellcolor{black!25}b & \cellcolor{black!10}a & b & a & b & \cellcolor{black!10}a & \cellcolor{black!10}b  \\
\end{tabular}
& $\rightarrow \mathrm{substr}(A[2],0,1) = \mathrm{substr}(B[2],2,1)$ \vspace{0.2em} \\

\begin{tabular}{ c c c c c ?{0.5mm} c c c c c  }
    \cellcolor{black!25}a & \cellcolor{black!25}b & \cellcolor{black!25}a & \cellcolor{black!25}b &  \cellcolor{black!25}a & b & \cellcolor{black!10}a & \cellcolor{black!10}b & a & b  \\
    \cellcolor{black!25}a & \cellcolor{black!25}b & \cellcolor{black!25}a & \cellcolor{black!25}b & \cellcolor{black!25}a & b & a & b & \cellcolor{black!10}a & \cellcolor{black!10}b  \\
\end{tabular}
& $\rightarrow \mathrm{substr}(A[2],1,1) = \mathrm{substr}(B[3],0,1)$ \vspace{0.2em} \\

\begin{tabular}{ c c c c c c ?{0.5mm} c c c c  }
    \cellcolor{black!25}a & \cellcolor{black!25}b & \cellcolor{black!25}a & \cellcolor{black!25}b &  \cellcolor{black!25}a & \cellcolor{black!25}b & \cellcolor{black!10}a & \cellcolor{black!10}b & a & b  \\
    \cellcolor{black!25}a & \cellcolor{black!25}b & \cellcolor{black!25}a & \cellcolor{black!25}b & \cellcolor{black!25}a & \cellcolor{black!25}b & a & b & \cellcolor{black!10}a & \cellcolor{black!10}b  \\
\end{tabular}
& $\rightarrow \mathrm{substr}(A[3],0,2) = \mathrm{substr}(B[3],1,2)$ \vspace{0.2em} \\

\begin{tabular}{ c c c c c c c c ?{0.5mm} c c  }
    \cellcolor{black!25}a & \cellcolor{black!25}b & \cellcolor{black!25}a & \cellcolor{black!25}b &  \cellcolor{black!25}a & \cellcolor{black!25}b & \cellcolor{black!25}a & \cellcolor{black!25}b & a & b  \\
    \cellcolor{black!25}a & \cellcolor{black!25}b & \cellcolor{black!25}a & \cellcolor{black!25}b & \cellcolor{black!25}a & \cellcolor{black!25}b & \cellcolor{black!25}a & \cellcolor{black!25}b & \cellcolor{black!10}a & \cellcolor{black!10}b  \\
\end{tabular}
& $\rightarrow \mathrm{substr}(A[4],0,2) = \mathrm{substr}(B[4],0,2)$ \vspace{0.2em} \\
\begin{tabular}{ c c c c c c c c c c ?{0.5mm} }
    \cellcolor{black!25}a & \cellcolor{black!25}b & \cellcolor{black!25}a & \cellcolor{black!25}b &  \cellcolor{black!25}a & \cellcolor{black!25}b & \cellcolor{black!25}a & \cellcolor{black!25}b & \cellcolor{black!25}a & \cellcolor{black!25}b  \\
    \cellcolor{black!25}a & \cellcolor{black!25}b & \cellcolor{black!25}a & \cellcolor{black!25}b & \cellcolor{black!25}a & \cellcolor{black!25}b & \cellcolor{black!25}a & \cellcolor{black!25}b & \cellcolor{black!25}a & \cellcolor{black!25}b  \\
\end{tabular}
& $\rightarrow$ yes   \\
\end{tabular}
\\
\vspace{2mm}
Input: $A=[\text{abab, ab, ab, ab}]$, $B=[\text{ab, aba, bab, ab}]$ 
	\end{center}
	\caption{Algorithm for the confidential string matching problem}
	\label{fig:csma}
\end{figure}

We show how to implement the obvious algorithm for this problem which is exemplified in Figure~\ref{fig:csma}. This figure can be seen as a graphical representation of a trace. Observe that along with the problem description this single trace already provides sufficient information to communicate the algorithm, i.e.~it should be clear how to create an analogous graphical trace for another input. 

\subparagraph*{Step 1: Determine the variables used by the algorithm.} 
For this algorithm we need to remember the current strings under consideration in array $A$ and $B$ ($ca,cb$), how much of them has been read (offsets $oa,ob$), the length for the next substring call ($l$) and a boolean variable for the result ($r$). This is also exactly the information encoded in Figure~\ref{fig:csma}.
For example, in the second step of Figure~\ref{fig:csma} the algorithm is at the first string of $A$ ($ca = 1$) with an offset of 2 ($oa=2$), at the second string of $B$ ($cb = 2$) with no offset ($ob = 0$) and the length for the next substring call is $l=2$. The input arrays $A$ and $B$ are also variables that the algorithm needs to access. However, since the algorithm does not modify them we will not explicitly mention them in the following. 

\subparagraph*{Step 2: Select inputs and produce traces.} 
A trace is a description of the contents of the variables after each execution step. It can be  represented as a table which has a column for each variable and the $i$-th row describes the contents of the variables after having executed $i$ steps of the algorithm. 

Let us consider the trace for the input in Figure~\ref{fig:csma}. 
The first step of the algorithm is to initialize the variables $ca$, $cb$, $oa$ and $ob$. After the first step it holds that $(ca,cb,oa,ob,l,r) = (1,1,0,0,\cdot,\cdot)$ since the algorithm starts with the first string in $A$ and $B$ and no offset (`$\cdot$' indicates the corresponding variable has not been assigned a value by the algorithm in the current step; $l$ and $r$ are undefined so far). The next step is to determine the length for the next substring call, i.e.~the number of characters until the end of one of the two current strings is reached. In this case this is two and thus after the second step it holds that $(ca,cb,oa,ob,l,r) = (\cdot,\cdot,\cdot,\cdot,2,\cdot)$. Since the two substrings match, i.e.~$\mathrm{substr}(A[1],0,2) = \mathrm{substr}(B[1],0,2)$, the algorithm increases the offset of the current string in $A$ and moves on to the next string in $B$ and resets the offset. This means after the third step it holds that $(ca,cb,oa,ob,l,r) = (\cdot,2,2,0,\cdot,\cdot)$. In the fourth step the algorithm determines the length for the next substring call which is two again and therefore after this step it holds that $(ca,cb,oa,ob,l,r) = (\cdot,\cdot,\cdot,\cdot,2,\cdot)$. After the fifth step it holds that $(ca,cb,oa,ob,l,r) = (2,\cdot,0,2,\cdot,\cdot)$ since the algorithm has moved on to the next string in $A$ and adjusted the offset for the current string in $B$. The full trace is shown in Table \ref{tab:ltrace1} (`$\cdot$' is represented by an empty cell). 

The set of inputs for which traces are produced should be chosen in such a way that most of the behavior of the algorithm is exhibited (high code coverage).  
In this case one should choose positive and negative inputs.  
An input is negative because either the length of the concatenated strings differ or they have identical length but are different. We produce additional traces for: $([\text{a, a, a}], [\text{aaa}])$, $([\text{ba, a}], [\text{b, ab}])$ and  $([\text{a}], [\text{aa}])$. Their  traces can be found in Table \ref{tab:ltrace2}, \ref{tab:ltrace3}, \ref{tab:ltrace4} respectively; they have an additional column `Name' which should be ignored at this point. Note, for the input $([\text{a}], [\text{aa}])$ the algorithm recognizes that the length of the concatenation of the strings in $A$ differs from the one in $B$ and therefore immediately sets $r$ to `$\bot$'.

\subparagraph*{Step 3: Generalize the literals in the traces.}
In the following we refer to the trace in Table~\ref{tab:ltrace1}. 
Let us write $x(i)$ to denote the value of variable $x$ after the $i$-th execution step of the algorithm and let $(x,i)$ denote the corresponding cell in the table. The goal of this step is to express the values in the $i$-th row of the table in terms of the values in row $i-1$.
The literals in the trace are instantiations of abstract quantities. For example, for $i=2$ the value $l(i)=2$ represents the abstract quantity $|B[1]|$.  This can be further generalized to $|B[cb]|$ since $cb(i-1) = 1$. We add this information to the table by writing `$2 := |B[cb]|$' in cell $(l,2)$. 

To see how the values in the third row emerge consider the first step in Figure~\ref{fig:csma}. After two characters have been read the algorithm is still at the first string in $A$ and has moved on to the next string in $B$, therefore we write `$2 := cb + 1$' in cell $(cb,3)$. Since two characters have been read the offset $oa$ has to be increased by this value, we write `$2 := oa + l$' in cell $(oa,3)$. 
The offset for the current string in $B$ needs to be reset since nothing of it has been read yet, we write `$0 := 0$' in cell $(ob,3)$.
The value $l(4)$ represents the length of the remainder of the current string in $A$, i.e.~$A[1]$ starting from offset 2; write `$2 := |A[ca]| - oa$' in cell $(l,4)$. In row 5 the situation is symmetric to row 3; instead of moving to the next string in $B$ one moves to the next string in $A$. In analogy to $(l,4)$ write `$2 := |B[cb]| - ob$' in cell $(l,6)$. At this point one might notice that the contents of cell $l(2)$ can be further generalized from `$2 := |B[cb]|$' to `$2 := |B[cb]| - ob$', which is the same expression as in cell $(l,6)$. After completing this step we arrive at Table~\ref{tab:btrace1}. By checking whether each expression indeed evaluates to the literal on the LHS we can immediately spot a wrong generalization.
This step is conducted for every trace.

\subparagraph*{Step 4: Identify and gather operations.}
Remove the literals and `:=' from the generalized traces. We call a row of expressions an operation. Observe that certain rows have the same operation. For example, this is the case for the rows 2, 6 and 3, 7 in Table~\ref{tab:btrace1}. 
Collect the different operations from all traces in a table and name them. We call the operation $(\cdot,\cdot,\cdot,\cdot, |A[ca]| - oa ,\cdot)$ ALEN because it determines the length of the remaining current string in $A$. Analogously, we name $(\cdot,\cdot,\cdot,\cdot, |B[cb]| - ob ,\cdot)$ BLEN. The operation $(ca + 1, \cdot, 0, ob + l,\cdot,\cdot)$ is called ANBS (A Next, B Stay).  The operations in row 3 and 11 are named ASBN and ANBN, respectively. See Table~\ref{tab:stateops} for the list of operations. For the other three traces the additional column `Name' refers to the operation. This is a more compact representation of a generalized trace. In general, the previous step of generalizing traces can be interleaved with this one. For example, one can generalize the first trace, identify its operations and then generalize the other traces by simply attaching the operation's name to each row. One should, however, not forget to check whether the operations match the literals. 

During this step one might notice that certain operations are missing because they do not appear in the considered traces. In that case one should find inputs which lead to traces with the missing operations and go back to step 2. For example, in this case one might wonder whether an operation ASBS is missing. After trying to come up with an input whose trace contains this operation one should eventually realize that it can never occur. We recommend to prove the existence of every operation by an actual input rather than justifying it by analogy since the latter might lead to phantom operations, i.e.~those which are never executed by the program. 

\subparagraph*{Step 5: Synthesize the control flow graph.}
Instead of writing lines of code we build the desired program by constructing its control flow graph $G$ from the traces. At each node of $G$ an operation is executed and an edge from $u$ to  $v$ indicates that the program \emph{might} proceed with $v$ after $u$. A generalized trace corresponds to a path through $G$.
The information that we have collected so far can be used to partially synthesize $G$. More specifically, let the vertex set of $G$ be the set of operations and a special vertex `Start' and add an edge for every two consecutive operations in the traces. For instance, in Table~\ref{tab:btrace1} the operation of row 1 is INIT and  the operation of row 2 is BLEN, thus we add an edge from INIT to BLEN in $G$. Additionally, add an edge from `Start' to every operation that occurs in row 1 of a trace. At the vertex `Start' no operation is executed; it can be thought of as the program state where the input is passed to the program.
This leads to the graph shown in Figure~\ref{fig:pcfg}. The states YES and NO are marked to indicate that they are terminating states, i.e~whenever the program reaches one of these states it has properly terminated. 
The edges (INIT,ALEN), (BLEN,NO), (Start,NO) are from Tables~\ref{tab:ltrace2}, \ref{tab:ltrace3}, \ref{tab:ltrace4}, respectively.

\begin{figure}
	\begin{center}
		\begin{tikzpicture}[shorten >=1pt,auto,node distance=1.2cm,
main node/.style={draw}]


\newcommand*{\xlen}{4cm}%
\newcommand*{\ylen}{1cm}%

\newcommand*{\xtlen}{1cm}%
\newcommand*{\ytlen}{0.4cm}%

\node[] (GHOST) at (-3.3,0) {};

\node[main node] (START) at (-1.9-0.3,0) {Start};
\node[main node] (INIT) at (0,0) {INIT};

\node[main node, above right = \ylen and \xlen of INIT] (ALEN) {ALEN};
\node[main node, below right = \ylen and \xlen of INIT] (BLEN) {BLEN};

\node[main node,fill={black!10}, above left = \ytlen and \xtlen-0.1cm of ALEN] (YES) {YES};
\node[main node,fill={black!10}, below left = \ytlen+0.2cm and \xtlen of ALEN] (NO) {NO};

\node[main node, right = \xlen + 4.2cm of INIT] (ANBN) {ANBN};
\node[main node, above = 2cm of ANBN] (ANBS) {ANBS};
\node[main node, below = 1.5cm of ANBN] (ASBN) {ASBN};


\path[->]
(GHOST) edge (START)
(START) edge (INIT)

(START) edge[bend left=20] (NO)

(INIT) edge[bend left=10] (ALEN)
(INIT) edge[bend right=10] (BLEN)

(ALEN) edge (YES)
(ALEN) edge[bend left=10] (ANBS)
(ALEN) edge[bend right=10] (ANBN)

(BLEN) edge[bend right=10] (ASBN)
(BLEN) edge (NO)

(ANBS.90) edge[bend right=40] (ALEN)
(ANBN) edge[bend right=10] (ALEN)
(ASBN) edge[bend right=10] (ALEN.-60)

(ANBS.240) edge[bend right=20] (BLEN)
;

\end{tikzpicture}
	\end{center}
	\caption{Partial control flow graph}
	\label{fig:pcfg}	
\end{figure}
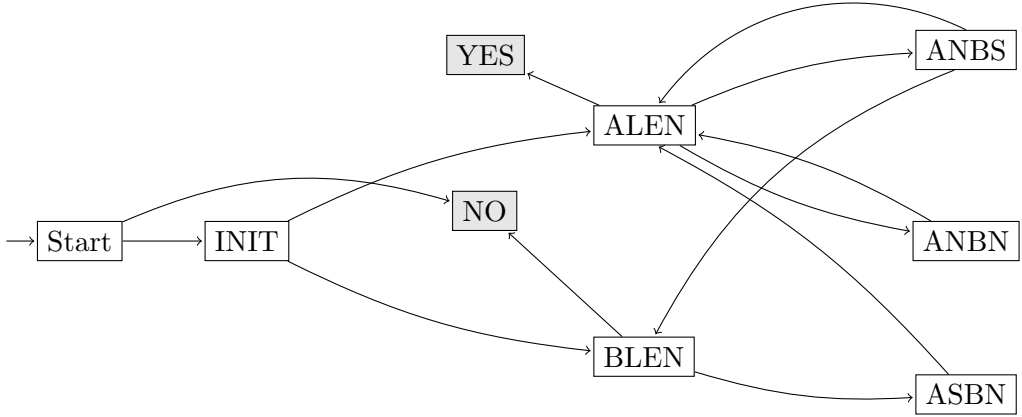

\subparagraph*{Step 6: Complete the control flow graph.} 
The objective of this step is to identify missing edges in the control flow graph and find inputs whose traces witness these missing edges. Let us start by considering the node Start. Should there be an edge from Start to YES? If we allow the input which consists of two empty arrays then there should be such an edge because we could immediately check whether the input arrays are empty, and if so, conclude that they match. However, we shall disallow empty arrays as input and therefore there is no such edge. If the program does not continue with NO after Start this implies that the concatenation of the strings in both arrays have the same length. Consequently, at least one substring check has to be performed and before that we have to initialize the variables. Therefore Start only goes to NO and INIT. 

There is no edge from INIT to YES or NO because at least one substring check has to be performed after this program state. There is no edge from INIT to ANBS, ANBN or ASBN because these operations are performed after a substring check which requires determining $l$ first. Since YES and NO are terminating states they have no outgoing edges.

Next, let us consider ALEN and BLEN. There is an edge missing from ALEN to NO, which is witnessed by the input $([\text{a}],[\text{b}])$ (we make the arbitrary choice that ALEN is executed whenever the remaining current strings in $A$ and $B$ are of equal length). There is no edge from ALEN to ASBN because after ALEN the rest of the current string in $A$ is read which implies that one has to move to the next string in $A$. There is no edge from BLEN to YES because the substring call after BLEN cannot be the last one. This holds because if BLEN is visited than the length of the remaining current string in $B$ is strictly shorter than the length of the one in $A$. This means there is still a part of the current string in $A$ to be checked. There is no edge from BLEN to ANBS for the same reason that there is no edge from ALEN to ASBN. Additionally, there is also no edge from BLEN to ANBN due to the arbitrary choice that we made above.  

Lastly, we consider ANBS, ANBN and ASBN. The program should only terminate right after ALEN or BLEN and INIT is never visited a second time. Thus there can only be edges from these three nodes to ALEN or BLEN. All these six edges should be present which means the edges (ANBN,BLEN) and (ASBN,BLEN) are missing. These two edges are witnessed by the input $([\text{a, aaa}],[\text{a, a, a, a}])$.

\subparagraph*{Step 7: Add edge predicates to the control flow graph.}
In this step the edges of the control flow graph are labeled with boolean expressions which specify what path the program takes through the control flow graph during execution. 

As previously stated the program goes from Start to NO if the concatenated strings of $A$ and  $B$ have different lengths. Let EQLEN denote the predicate which is true iff the concatenated strings are of equal length, i.e.~$\sum_{i=1}^{|A|} |A[i]| = \sum_{i=1}^{|B|} |B[i]|$. This means we label the edge (Start, NO) with `$\neg$EQLEN'. After labeling an edge with a predicate we have to verify that it is consistent with the traces. In this case it has to hold that one goes from Start to NO in a trace iff the predicate `$\neg$EQLEN' is satisfied, which is the case. Remember that we omitted columns for the input variables $A,B$ in the traces to save space since they are not changed by the program. However, in general all variables should appear in the traces. The program state Start can be thought of as a 0-th row in the traces where the input variables are set. The predicate EQLEN is interpreted w.r.t.~this 0-th row. Also, we label (Start, INIT) with `EQLEN'.

The program goes from INIT to ALEN if the length of the remaining current string in $A$ is not larger than the one in $B$, i.e.~$|A[ca]| - oa \leq |B[cb]| - ob$. Let ALEQ denote this predicate. We label (INIT,ALEN) with `ALEQ' and (INIT,BLEN) with `$\neg$ALEQ'. Again, we have to verify that the newly added edge predicates are consistent with the traces. In Table~\ref{tab:btrace1} one goes from INIT (row 1) to BLEN (row 2). It holds that ALEQ is false ($|A[ca(1)]| - oa(1) = 4  \not\leq 2 = |B[cb(1)]| - ob(1) $). In Table~\ref{tab:ltrace2} one goes from INIT to ALEN. In that case ALEQ is true because $|A[ca(1)]| - oa(1) = 1 \leq 3 = |B[cb(1)]| - ob(1)$. Similarly, in Table~\ref{tab:ltrace3} one goes from INIT to BLEN and ALEQ is false. Therefore the newly added predicates are consistent with all traces. 

For $X \in \{ \text{ANBS, ANBN, ASBN}\}$ label $(X,\text{ALEN})$ with `ALEQ' and $(X,\text{BLEN})$ with `$\neg$ALEQ'. It holds that INIT, ANBS, ANBN and ASBN go to ALEN if ALEQ holds and to BLEN otherwise. Since these four program states behave identically in the sense that they have the same set of outgoing edges with the same edge predicates, the control flow can be simplified as follows.
We restructure the control flow graph by adding a new node called NOOP (no operation) which has incoming edges from INIT, ANBS, ANBN and ASBN and outgoing edges to ALEN and BLEN. The program does not modify any variable at this program state and thus we can assume that it does not add a row to the traces.  
We label ($X$,NOOP) with `$\top$' (constant true) for $X \in  \{ \text{INIT, ANBS, ANBN, ASBN}\}$, (NOOP,ALEN) with `ALEQ' and (NOOP,BLEN) with `$\neg$ALEQ'. Consistency is maintained after these modifications.

It remains to specify the edge predicates for the outgoing edges of ALEN and BLEN. Let us start with BLEN which either goes to NO or ASBN. The only reason the program should go to NO if the concatenated strings have equal length is if one of the substring checks fails. Let SS be true if the substring check succeeds, i.e.~$\mathrm{substr}(A[ca],oa,l) = \mathrm{substr}(B[cb],ob,l)$. Label (BLEN,NO) with `$\neg$SS' and (BLEN,ASBN) with `SS'. Let us consider ALEN. For the same reason as before we can label (ALEN,NO) with `$\neg$SS'. The program should go to YES if every substring check succeeded. In particular, the program should go from ALEN to YES if the next substring check succeeds and it is the final one. Let FC (final call) be the predicate which denotes that the next substring check is the last one. Label (ALEN,YES) with `SS $\wedge$ FC'. Whether one goes from ALEN to ANBS or ANBN depends on whether the end of the current string in $B$ has been reached. Label (ALEN,ANBS) with `$\neg$EOB $\wedge$ SS $\wedge$ $\neg$FC' and  (ALEN,ANBN) with `EOB $\wedge$ SS $\wedge$ $\neg$FC' where EOB is the predicate which expresses that the end of the current string in $B$ has been reached. Notice, that the part `$\wedge \neg$FC' for the edge predicate of (ALEN,ANBS) can be omitted since $\neg$EOB implies that the next substring call cannot be the final one ($\neg$FC). The definition of all predicates is given in Table~\ref{tab:predicates}. 

The final control flow graph with edge predicates is shown in Figure~\ref{fig:ccfg}. Together with the definitions of the operations and predicates this is a complete program. 

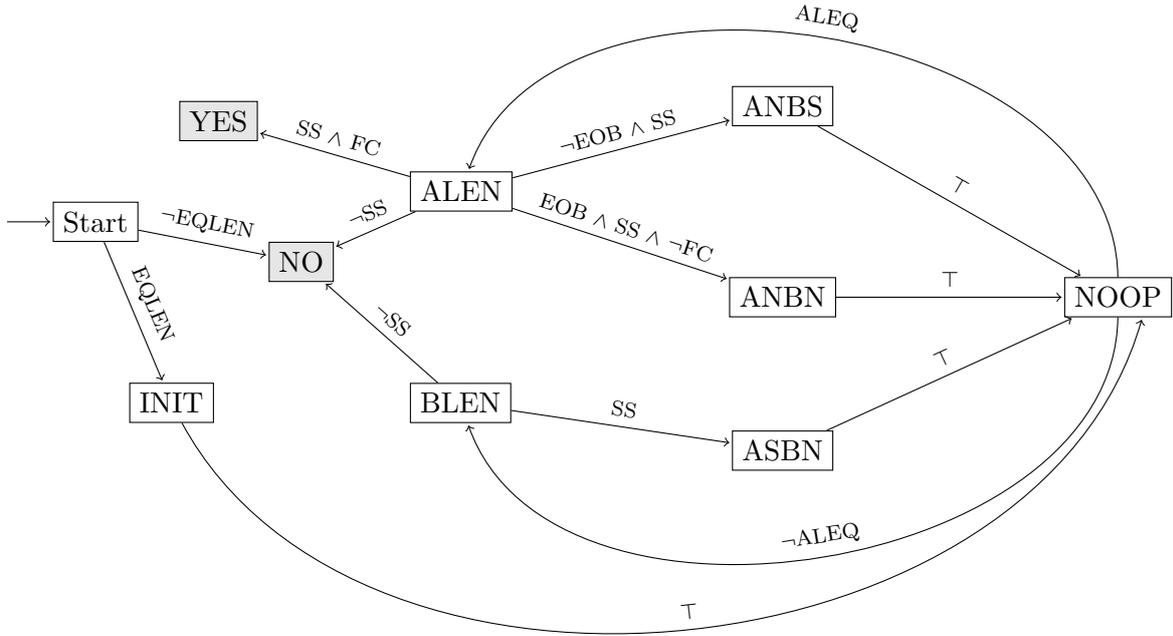
\begin{figure}
	\begin{center}
		\begin{tikzpicture}[shorten >=1pt,auto,node distance=1.2cm,
main node/.style={draw}]

\path[use as bounding box] (-1.5,-4.8) rectangle (14.8,4.1);

\newcommand*{\xlen}{4cm}%
\newcommand*{\ylen}{1cm}%

\newcommand*{\xtlen}{1cm}%
\newcommand*{\ytlen}{0.4cm}%

\node[] (GHOST) at (-1.3,1) {};
\node[main node] (START) at (0,1) {Start};

\node (INIT) at (0,0) {};

\node[main node] (INITR) at (1,-1.4) {INIT};

\node[main node, above right = \ylen and \xlen of INIT] (ALEN) {ALEN};
\node[main node, below right = \ylen and \xlen of INIT] (BLEN) {BLEN};

\node[main node,fill={black!10}, above left = \ytlen and \xtlen+1cm of ALEN] (YES) {YES};
\node[main node,fill={black!10}, below left = \ytlen and \xtlen of ALEN] (NO) {NO};

\node[main node, right = \xlen + 4.2cm of INIT] (ANBN) {ANBN};
\node[main node, above = 2cm of ANBN] (ANBS) {ANBS};
\node[main node, below = 1.5cm of ANBN] (ASBN) {ASBN};

\node[main node, right = 3cm of ANBN] (NOOP) {NOOP};

\path[->]
(GHOST) edge (START)
(START) edge node[midway, above, sloped]{\scriptsize EQLEN} (INITR)
(START) edge node[midway, above, sloped]{\scriptsize $\neg$EQLEN} (NO)

(ALEN) edge node[midway, above, sloped]{\scriptsize SS $\wedge$ FC} (YES)
(ALEN) edge node[midway, above, sloped]{\scriptsize $\neg$SS}  (NO)
(ALEN) edge node[midway, above, sloped]{\scriptsize $\neg$EOB $\wedge$ SS}  (ANBS)
(ALEN) edge node[midway, above, sloped]{\scriptsize EOB $\wedge$ SS $\wedge$ $\neg$FC}  (ANBN)

(BLEN) edge node[midway, above, sloped]{\scriptsize SS} (ASBN)
(BLEN) edge node[midway, above, sloped]{\scriptsize $\neg$SS} (NO)

(ANBS) edge node[midway, above, sloped]{\scriptsize $\top$} (NOOP)
(ASBN) edge node[midway, above, sloped]{\scriptsize $\top$} (NOOP)
(ANBN) edge node[midway, above, sloped]{\scriptsize $\top$} (NOOP)

(NOOP) edge[bend right=80] node[midway, above, sloped]{\scriptsize ALEQ} (ALEN)
(NOOP) edge[bend left=80] node[midway, above, sloped]{\scriptsize $\neg$ALEQ} (BLEN)

(INITR) edge[bend right=68] node[midway, above, sloped]{\scriptsize $\top$} (NOOP.-40)
;
%
%


\end{tikzpicture}
	\end{center}
	\caption{Complete control flow graph with edge predicates}
	\label{fig:ccfg}	
\end{figure}

\subparagraph*{Further remarks.} 
Notice that the variable $l$ is functionally dependent on the remaining variables. More specifically, $l = \min ( |A[ca]|-oa, |B[cb]|-ob )$ holds whenever it is used. This can be used to merge the operations ALEN and BLEN since the previous expression generalizes both operations. Alternatively, one can also completely omit the variable $l$ and the operations where it is computed. In that case its computation is delegated to the edge predicates and the number of program states can be reduced. Demanding that there is no functional dependency among the variables might be a useful convention as it leads to more canonical programs, which facilitates comprehending someone else's code (reminiscent of normal forms in database theory). 


\subsection{Lindell's Tree Isomorphism Algorithm}

In \cite{lindell} it is shown how to decide whether two rooted trees are isomorphic 
using $\mathcal{O}(\log n)$ space where $n$ is the number of nodes of both trees.
The algorithm has read-only access to the two input trees $S, T$. 
It can query the root nodes and, furthermore, the parent, the lexicographically first child and the lex.~next sibling of a given node. 
It uses four types of variables. The first type stores a pointer to a node. 
The second type can hold a constant number of different values, i.e.~an enum. 
The third type holds a non-negative integer. 
The fourth type is a stack whose elements are one of the previous three types. Variables of the first and second type only require $\mathcal{O}(\log n)$ space by definition. For variables of the other two types we argue later on why they do not exceed the logarithmic space bound. Since the algorithm always pushes and pops from all stacks at the same time, we see them as a single stack whose elements are sequences of values. 

In addition to the description of the algorithm in the following subsection, we recommend reading the first two parts of the implementation (step 1 to 4) and the associated traces. This makes it easier to comprehend how the algorithm works. 

In the following we omit the qualifier rooted. Let $T$ be a tree and let $v$ be a node of $T$. We write $|T|$ to denote the number of nodes of $T$, $T_v$ to denote the subtree of $T$ induced by $v$ ($T_v$ has root node $v$) and $\#v$ to denote the number of children of $v$. 

\subsubsection{Algorithm}
The rough idea behind the algorithm is as follows. A binary relation $\prec$ on trees is defined such that two trees $S,T$ are isomorphic ($S \cong T$) iff neither $S \prec T$ nor $T \prec S$ holds. The relation $\prec$ is a strict order on the isomorphism classes of trees and it is defined recursively in terms of the subtrees induced by the children of the root node.
Assume two trees $S,T$ are given as input. Let $\textsc{cmp}(s,t)$ be a recursive procedure where $s,t$ are nodes of $S,T$ (respectively) and $\textsc{cmp}(s,t)$ outputs whether $S_s \prec T_t$ or $T_t \prec S_s$ or $S_s \cong T_t$. Thus, calling $\textsc{cmp}(s,t)$ with $s,t$ being the root nodes of $S,T$ yields the desired result. However, a naive implementation of $\textsc{cmp}(s,t)$ would require too much space since the recursion depth $d$ is determined by the depth of the trees $S$ and $T$ and thus can be linear w.r.t.~the input size $n$ (each recursive call requires to store the caller environment on a stack and thus the stack contains at least $d$ bits at some point). The algorithm computes $\textsc{cmp}(s,t)$ in such a way that under certain circumstances the caller environment can be restored without any information at all and otherwise only very little information needs to be pushed on the call stack.  

Before we define $\prec$, let us first show how a strict order on a given domain can be extended to a strict order on multisets over the same domain. This simplifies the definition of $\prec$. The multisets that we want to order have the same cardinality. For example, consider $\N$ as domain along with its natural order $<$. Suppose you want to order the multisets $A=\multiset{ 6,1,3,1,1,2,3}$ and $B=\multiset{  8,1,4,1,1,2,2 }$. You can do this by converting them into sorted tuples (in ascending order) and then comparing the tuples lexicographically. The sorted tuple of $A$ is $\vec{a} = (1,1,1,2,3,3,6)$ and for $B$ it is $\vec{b} = (1,1,1,2,2,4,8)$. Then $B < A$ because the fifth element of $\vec{b}$ is smaller than that of $\vec{a}$.   

\begin{definition}
    Let $S,T$ be two trees with root nodes $s,t$. Let $S_1,\dots,S_{\#s}$ and $T_1,\dots,T_{\#t}$ be the subtrees induced by the children of $s$ and $t$.
    The binary relations $\prec_1$ and $\prec$ are defined as follows.
    The relation $S \prec_1 T$ holds if: 
    \begin{enumerate}
        \item $|S| < |T|$, or
        \item $|S| = |T|$ and $\#s < \#t$, or
        \item $|S| = |T|$, $\#s = \#t = r$ and $\multiset{ |S_1|,\dots,|S_r| }  < \multiset{ |T_1|,\dots,|T_r| }$
    \end{enumerate}
	The relation $S \prec T$ holds if:
    \begin{enumerate}
	\item $S \prec_1 T$, or
	\item $S \not\prec_1 T$, $T \not\prec_1 S$ and $\multiset{S_1,\dots,S_r} \prec \multiset{T_1,\dots,T_r}$
\end{enumerate}	
\label{def:lto}
\end{definition}

The definition of $\prec$ given here slightly differs from the original one given by Lindell (if the third condition is removed from $\prec_1$ then they are equivalent) to simplify the algorithm. 

Now, we can explain how the algorithm computes $\textsc{cmp}(s,t)$ 
when given two trees $S,T$ as input. All variables used by the algorithm are global, i.e.~they can be accessed from anywhere during execution. 
The procedure $\textsc{cmp}$ is implemented without any parameters using two global variables $s,t$ instead.
Calling $\textsc{cmp}(u,v)$ corresponds to setting $s$ to $u$ and $t$ to $v$ and then calling $\textsc{cmp}$; since no local variables are used this can be realized with gotos and a pointer to remember where the call came from. 

\subparagraph{Recursion termination (non-recursive checks).} 
Suppose $s,t$ are set and $\textsc{cmp}$ is called for the last time, i.e.~the tree of recursive calls has reached a leaf and thus the recursion starts to return. We have to determine whether $S_s \prec T_t$ (`$\prec$') or $T_t \prec S_s$ (`$\succ$') or $S_s \cong T_t$ (`$\cong$'), store the result in a variable $res$ and return. Let $\textsc{cmp}_1$ be a procedure which outputs whether $S_s \prec_1 T_t$ (`$\prec_1$') or $T_t \prec_1 S_s$ (`$\succ_1$')  or neither (`$\cong_1$') . Execute $\textsc{cmp}_1$. Suppose $\textsc{cmp}_1$ returns `$\prec_1$'. This implies $S_s \prec T_t$. Therefore we can set $res$ to `$\prec$' and return to the caller or terminate if there exists no caller. If $s,t$ are the root nodes of $S,T$ then
the computation of the recursive procedure $\textsc{cmp}$ is finished and $S$ and $T$ are non-isomorphic. Assume $s,t$ are not the root nodes of $S,T$. Let $s_0,t_0$ be the parent nodes of $s,t$. The call to the current $\textsc{cmp}$ came from the parent nodes since only parents are interested in comparing their children. Therefore we set $s$ to $s_0$ and $t$ to $t_0$ and return. The case `$\succ_1$' can be handled analogously. Assume $\textsc{cmp}_1$ returns `$\cong_1$'. From the definition of $\prec$ it follows that the children of $s,t$ have to be compared w.r.t.~$\prec$. This implies that further recursive calls have to be made. Since we are at a leaf in the tree of recursive calls this cannot be the case.

In order to compute $\textsc{cmp}_1$ we have to compute $\prec_1$. This reduces to solving the following three problems in logspace. Let $T$ be a tree and let $v$ be a node of $T$. Problem 1 is to compute the size of the induced subtree $|T_v|$. Problem 2 is to count the number of children of $v$. Problem 3 is to compare two multisets $A,B$ over $\N$ with the same cardinality. The latter two problems are straightforward to solve. The first problem can be solved via a depth-first traversal through the tree starting from $v$ and counting the nodes during this process (see \cite{lindell}). For the sake of brevity we assume that programs which solve these three problems and a program which computes $\textsc{cmp}_1$ have been already implemented and can be used by our implementation.

\subparagraph{Equicardinality blocks.} 
   Let $T$ be a tree and let $v$ be a node of $T$. We call a non-empty, inclusion-maximal subset $B = \{v_1,\dots,v_l\}$ of children of $v$ a block if the size of the induced subtree of every element in $B$ has the same size $k$, i.e.~$|T_{v_1}| = \dots = |T_{v_l}| = k$. We say $B$ has block size $k$ and cardinality $l$. As $k$ uniquely determines $B$ we also write $B(k)$ to refer to $B$. For example, let $T$ be a tree with a node $v$ which has children $v_1,\dots,v_6$ with
    \[ (|T_1|,\dots,|T_6|) = (2,5,2,4,4,2) \]
    where $T_1,\dots,T_6$ denote the subtrees induced by the children of $v$. Then $v$ has blocks $B(2),B(4),B(5)$ with respective cardinalities $3,2,1$. We call $\bigl(\begin{smallmatrix}
    2 & 4 & 5 \\
    3 & 2 & 1 
    \end{smallmatrix}\bigr)$ the block signature of $s$; this is just a different representation of the multiset $\multiset{ |T_1|,\dots,|T_6| }$. Notice that the third condition in the definition of $\prec_1$ 
    requires us to compare the block signatures of the root nodes. Therefore if we reach the recursive part of the algorithm, we know that the two nodes $s,t$ under consideration have the same block signature. Suppose that both have the block signature $\bigl(\begin{smallmatrix}
    k_1 & \dots & k_j \\
    l_1 & \dots & l_j 
    \end{smallmatrix}\bigr)$ for some $j \geq 1$. In the recursive part of the algorithm we iterate from $i=1$ to $j$ and compare the children in block $B(k_i)$ of $s$ with the children in block $B(k_i)$ of $t$. We add a subscript to denote from what tree a block is. 
    We say two blocks $B_S,B_T$ match if there exists a bijection $\pi \colon B_S \rightarrow B_T$ such that $S_v \cong T_{\pi(v)}$ holds for all $v \in B_S$. The subtrees induced by $s$ and $t$ are isomorphic iff all of their blocks match.

\subparagraph{Memoryless recursion ($l=1$).} 
Consider the two isomorphic trees $S,T$ shown in Figure~\ref{fig:ex_trees1}. 
When $\textsc{cmp}(1',1'')$ is called the algorithm first computes $\textsc{cmp}_1(1',1'')$ to check whether $S_{1'} \prec_1 T_{1''}$ or $T_{1''} \prec_1 S_{1'}$  holds. Neither holds because both trees have the same size and the root nodes have the same block signature $\bigl(\begin{smallmatrix}
2 & 3 & 4 \\
1 & 1 & 1 
\end{smallmatrix}\bigr)$. Therefore the algorithm proceeds with the recursive part where each pair of blocks is compared. First, it checks whether the blocks $B_S(2), B_T(2)$ match. They match iff $S_{2'} \cong T_{4''}$. Therefore a recursive call $\textsc{cmp}(2',4'')$ is made. In our example this call returns `$\cong$' and the algorithm can proceed with the comparison of the blocks $B_S(3), B_T(3)$. The environment before the recursive call consists of $(s,t,k) = (1',1'',2)$ where $k$ is the current block size under consideration. Observe that this information can be restored after returning from the call $\textsc{cmp}(2',4'')$ without remembering anything. It is clear that the call was issued from the parent nodes of $2', 4''$ and the block size must have been $|S_{2'}|$. After comparing the blocks with size 3 and the ones with size 4, the algorithm concludes that the trees are isomorphic since no mismatch was found. 

Suppose that the blocks $B_S(2), B_T(2)$ do not match, e.g.~$S_{2'} \prec T_{4''}$. Then after returning from the call $\textsc{cmp}(2',4'')$ with the result `$\prec$', the algorithm  immediately concludes that $S_{1'} \prec T_{1''}$ holds without comparing the remaining blocks. This is correct  due to the lexicographic nature of $\prec$.

\begin{figure}
    \begin{center}
        \begin{tikzpicture}[shorten >=1pt,auto,node distance=1.2cm,
main node/.style={draw}]

\node at (-2.4,1.6) {$S$};
\node at (2.65,1.6) {$T$};

\node at (-2,0) {
\Tree [.$1'$ [.$2'$ [.$5'$ ] ] [.$3'$ [.$6'$ ] [.$7'$ ] ] [.$4'$ [.$8'$ ] [.$9'$ ] [.$10'$ ] ] ]
};

\node at (2.6,0) {
\Tree [.$1''$ [.$2''$ [.$5''$ ] [.$6''$ ] ] [.$3''$ [.$7''$ ] [.$8''$ ] [.$9''$ ] ] [.$4''$ [.$10''$ ] ] ]
};

\end{tikzpicture}
    \end{center}
    \caption{Example trees ($l=1$)}
    \label{fig:ex_trees1}
\end{figure}
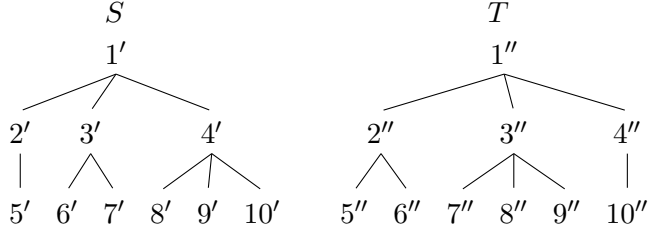

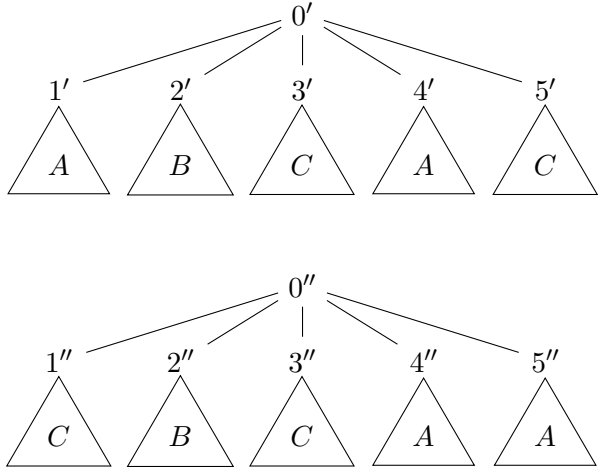
\begin{figure}
    \begin{center}
        \begin{tikzpicture}[shorten >=1pt,auto,node distance=1.2cm,
main node/.style={draw},triangle/.style = {draw, regular polygon, regular polygon sides=3 }]

\newcommand*{\xa}{4cm}%
\newcommand*{\ya}{1cm}%

\newcommand*{\xb}{4cm}%
\newcommand*{\yb}{-2.6cm}%

\newcommand*{\xlen}{1.6cm}%
\newcommand*{\ylen}{1cm}%

\node (srn) at ({\xa},{\ya}) {$0'$};

\node (s3) at ({\xa},{\ya-1*\ylen}) {$3'$};
\node (s2) at ({\xa-\xlen},{\ya-1*\ylen}) {$2'$};
\node (s1) at ({\xa-2*\xlen},{\ya-1*\ylen}) {$1'$};
\node (s4) at ({\xa+\xlen},{\ya-1*\ylen}) {$4'$};
\node (s5) at ({\xa+2*\xlen},{\ya-1*\ylen}) {$5'$};

\node[triangle] (s1t) at ({\xa-2*\xlen-0.2},{\ya-2*\ylen}) {$A$};
\node[triangle] (s1t) at ({\xa-1*\xlen-0.2},{\ya-2*\ylen}) {$B$};
\node[triangle] (s1t) at ({\xa-0*\xlen-0.2},{\ya-2*\ylen}) {$C$};
\node[triangle] (s1t) at ({\xa+1*\xlen-0.2},{\ya-2*\ylen}) {$A$};
\node[triangle] (s1t) at ({\xa+2*\xlen-0.2},{\ya-2*\ylen}) {$C$};

\path
(srn) edge (s3)
(srn) edge (s2)
(srn) edge (s1)
(srn) edge (s4)
(srn) edge (s5)
;

\node (srn) at ({\xb},{\yb}) {$0''$};

\node (s3) at ({\xb},{\yb-1*\ylen}) {$3''$};
\node (s2) at ({\xb-\xlen},{\yb-1*\ylen}) {$2''$};
\node (s1) at ({\xb-2*\xlen},{\yb-1*\ylen}) {$1''$};
\node (s4) at ({\xb+\xlen},{\yb-1*\ylen}) {$4''$};
\node (s5) at ({\xb+2*\xlen},{\yb-1*\ylen}) {$5''$};

\node[triangle] (s1t) at ({\xb-2*\xlen-0.2},{\yb-2*\ylen}) {$C$};
\node[triangle] (s1t) at ({\xb-1*\xlen-0.2},{\yb-2*\ylen}) {$B$};
\node[triangle] (s1t) at ({\xb-0*\xlen-0.2},{\yb-2*\ylen}) {$C$};
\node[triangle] (s1t) at ({\xb+1*\xlen-0.2},{\yb-2*\ylen}) {$A$};
\node[triangle] (s1t) at ({\xb+2*\xlen-0.2},{\yb-2*\ylen}) {$A$};

\path
(srn) edge (s3)
(srn) edge (s2)
(srn) edge (s1)
(srn) edge (s4)
(srn) edge (s5)
;

\end{tikzpicture}
    \end{center}
    \caption{Example trees ($l>1$);  $A \prec B \prec C$ and $|A|=|B|=|C|=x$}
    \label{fig:ex_trees}
\end{figure}

\subparagraph{Recursion with order profiles / cross comparison ($l>1$).} 
It remains to explain how the algorithm deals with blocks that have cardinality greater than 1. Consider the trees $S,T$ shown in Figure~\ref{fig:ex_trees}. Assume that the subtrees $A,B,C$ have the same size $x=|A|=|B|=|C|$ and $A \prec B \prec C$. Thus, both root nodes have only one block $B(x)$ with cardinality $l=5$. We call $A$ the isomorphism type of $1',4',4'',5''$, $B$ the isomorphism type of $2',2''$ and so on. The blocks $B_S(x), B_T(x)$ match iff their multisets of isomorphism types coincides, which is the case in this example. 

Let $v$ be a node in $B_S(x)$.
The order profile of $v$ is $({gt}_v,{eq}_v)$ where ${gt}_v$ (${eq}_v$) is the number of nodes $w$ in $B_T(x)$ such that $T_w \prec S_v$ ($T_w \cong S_v$).
For example, the nodes $1',4',4'',5''$ have order profile $(0,2)$, the nodes $2',2''$ have order profile $(2,1)$ and the remaining nodes have order profile $(3,2)$. The order profile of a node only depends on its isomorphism type. In order to compute the order profile of a node from $S$ we only have to compare it to nodes from $T$ and vice versa. 

To check whether the two blocks match, we proceed as follows. First, we compute the order profile of each node in $B_S(x)$ until a node $v$ is found such that ${gt}_v = 0$. The node $v=1'$ is the first such node. Then we try to find a node $w$ in $B_T(x)$ such that ${gt}_w = 0$. The node $w=4''$ is the first such node. Since ${gt}_v = 0$ it follows that there is no isomorphism type in $B_T(x)$ that is smaller than that of $v$. An analogous statement holds for $w$. Therefore we can conclude that $v$ and $w$ have the same isomorphism type. Since ${eq}_v = 2 = {eq}_w$ it follows that the isomorphism type $A$ occurs twice in both blocks. If either such a node $v$ or $w$ would not exist or ${eq}_v \neq {eq}_w$ then we could conclude that the two blocks do not match. For example, if ${eq}_w < {eq}_v$ or no node $v$ with ${gt}_v = 0$ exists then $T_t \prec S_s$.   

Since both blocks have two nodes with the smallest isomorphism type, we can now check whether both blocks also have the same number of the next smallest isomorphism type $B$. To do this, compute the order profile of each node in $B_S(x)$ until a node $v$ is found such that ${gt}_v = 2$. The node $v=2'$ is the first such node.  Next, compute the order profile of every node in $B_T(x)$ until a node $w$ is found such that ${gt}_w = 2$. The node $w=2''$ is the first such node. Since ${eq}_v = 1 = {eq}_w$ it follows that the isomorphism type $B$ occurs once in both blocks. Next, we have to look for nodes $v$ and $w$ such that ${gt}_v, {gt}_w = x$ where $x = 2+1$; this is the sum of the previous order profile. We continue with this comparison until the sum of the previous order profile equals $l$. In this example, this holds after the last isomorphism type $C$ for which the order profile is $(3,2)$ since $3+2=l$.

It remains to explain how to restore the caller environment after a recursive call.  The environment consists of $(s,t,k) = (0',0'',x)$, the node under consideration in each block ($s'$,$t'$), whether we are looking for a node in the block of $S$ or $T$ ($f$ like flag), the two (incomplete) order profiles computed so far ($sgt,seq,tgt,teq$) and an offset for the sum of the previous order profile ($h$). Suppose we return from a recursive call $\textsc{cmp}(u,v)$. Then the values of $s,t,k$  can be restored just as in the memoryless case, i.e.~$s,t$ are the parents of $u,v$ and $k=|S_u|$. The values of $s',t'$ are $u,v$. Notice that the values of the remaining six variables range between $0$ and $l$. Before the recursive call $\textsc{cmp}(u,v)$ is made, we push their values onto a stack. After returning, we restore their values by popping them from the stack (just as one would implement a normal recursion). 

Let us briefly sketch why the size of the stack does not exceed the logarithmic space bound. Suppose you have two trees $S,T$ with $n$ nodes and you currently consider the block $B(k_1)$ which has cardinality $l_1$.  Clearly, it holds that $k_1l_1 \leq n$ and therefore $l_1 \leq n/k_1$. Next, you consider the block $B(k_2)$ with cardinality $l_2$ within $B(k_1)$. It holds that $k_2l_2 \leq k_1$ and thus $l_2 \leq k_1/k_2$. After further recursive calls, you consider the block $B(k_r)$ within $B(k_{r-1})$ with cardinality $l_r$. It holds that $l_r \leq k_{r-1}/k_r$. It follows that $l_1 l_2 \dots l_r \leq n$. At this point, the contents of the stack can be represented using $c \Sigma_{i=1}^r \log l_i$ bits for some constant $c \in \N$. It holds that this is in $\mathcal{O}(\log n)$ because $ \Sigma_{i=1}^r \log l_i = \log (l_1 l_2 \dots l_r)$ and $l_1 l_2 \dots l_r \leq n$.

\subsubsection{Implementation}

\subparagraph*{Step 1: Determine the variables used by the algorithm.} 
The algorithm has two input variables $S,T$ which contain the trees that we want to compare. Since these variables are never modified by the algorithm, they can be omitted in the traces. Moreover, the algorithm uses the variables $s,t$ to remember the current nodes whose induced subtrees are being compared, $k$ for the current block size under consideration and $res$ to pass the result of a comparison to the caller. 
At certain points the algorithm also needs to know the cardinality $l$ of the current block $B(k)$. Since the value of $l$ is functionally dependent on $k$ and $s$, i.e.~$l$ is the number of children $v$ of $s$ such that $|S_v| = k$, we do not use a variable for $l$. If both trees have no block with cardinality $l>1$ then these are the only variables used by the algorithm. Otherwise, the variables $s',t'$ are used to store the current nodes that are being compared in the cross comparison, $sgt,seq,tgt,teq$ contain the (partial) order profiles, $h$ contains the sum of the previous order profile, $f$ is a flag that indicates whether we look for a node in $S$ or $T$ and $stk$ is a stack that stores the values of the previous six variables during recursion.

\subparagraph*{Step 2, 3 \& 4: Select inputs, produce traces and identify operations.} 
Consider the trace for $S_1, T_1$ in Table~\ref{tab:ti_trace1}. We refer to $S_1$ and $T_1$ as $S$ and $T$ in this paragraph,.
Since the isomorphic trees $S,T$ have no block with cardinality $l>1$ it suffices to consider the variables $s,t,k,res$. In row 1 we initialize $s$ and $t$ with the root nodes of $S$ and $T$ and set $k$ to 0. 
We use the operation NB (next block) to iterate over the block sizes. It computes the smallest block size of $s$ in $S$ larger than $k$. By setting $k$ to 0 and calling NB we obtain the first (smallest) block size. Since the block signatures of the root nodes match, we have to compare their blocks and thus continue with the operation NB in row 2. Since $B(2)$ has cardinality 1, it follows that we can compare the unique children in $B(2)$ by setting $s$ and $t$ appropriately. The operation GC (get child) sets $s,t$ to the lexicographically first child in $B(k)$ and sets $k$ to 0 to restart the block iteration. In row 5 we compare the nodes 3 and 14. Since the nodes 3 and 14 have no children (and therefore no block size larger than 0) it follows that all blocks match and we can set $res$ to $\cong$ in row 6. 
The operation RET is used to return from a memoryless recursion, which is the case whenever the cardinality of the block which contains $s$ is 1. It sets $s$ and $t$ to the parent nodes of the current nodes and $k$ to $|S_s|$ (the previous block size). In row 7 we return from the recursive call where $3$ and $14$ were compared. In row 8 we would continue with the next block size. But since $2$ has no block size larger than $1$ it follows that we have compared all blocks and found no mismatch. Therefore we can conclude that $S_2$ and $T_{13}$ are isomorphic in row 8. In row 9 we return and in row 10 we continue with the next block size 4. In row 29 we conclude that the subtrees $S_8$ and $T_6$ are isomorphic and in row 30 we return to their parents. Since there is no more block left to compare we conclude in row 31 that $S_1$ and $T_1$ are isomorphic. The program terminates after row 31 since $s$ and $t$ are the root nodes.

Consider the trace for $S_2,T_2$ in Table~\ref{tab:ti_trace2}. We refer to $S_2$ and $T_2$ as $S$ and $T$. In row 3 we start comparing the subtrees $S_2$ and $T_2$. Since $T_2$ has less children than $S_2$ (this means their block signatures differ) we can conclude that $S_2 \succ T_2$ in row 4. After returning to their parents in row 5, we can immediately conclude that $S_1 \succ T_1$ in row 6 because the blocks $B(3)$ did not match. 

Consider the trace for $S_3,T_3$ in Table~\ref{tab:ti_trace3}. We refer to $S_3$ and $T_3$ as $S$ and $T$. The labeled triangles represent the isomorphism types of the induced subtrees. For example, $S_2$ has isomorphism type $A$ and $T_2$ has isomorphism type $C$ and $|S_2|=|T_2|=x$. The isomorphism type of $S_4$ and $T_4$ is $B$. In row 1 $s,t,k$ are initialized as in the previous two traces. Additionally, $stk$ is initialized with the empty stack. In row 2 we get the the smallest block size of node $0$ in $S$ which is $x+1$. Since $B(x+1)$ has cardinality 5 we have to execute the cross comparison part of the algorithm. During the cross comparison the value of $h$ is increased until either $h=l$ (meaning all blocks match) or a mismatch is found. In the following, we describe the operations used for the cross comparison.

The operation SETH sets $h$ to 0. It is called whenever a cross comparison begins. It occurs in row 3 (c.c.~of the nodes 0 and 0) and 72 (c.c.~of the nodes 4 and 4). 
The operation INCH increases $h$ by $seq$. It is called whenever two nodes with prescribed order profile $h=sgt=tgt$ and $seq =teq$ have been found. For example, in row 129 the nodes 1 of $S$ and 11 of $T$ with order profile $(0,2)$ have been found and $h$ is increased by 2. 
Other than that, it occurs in the rows 101, 146 and 163.

The operation FINDS sets $s'$ and $t'$ to the lexicographically minimal children of the current block, it initializes the order profile variables with 0 and sets $f$ to S to indicate that a node in $S$ is being searched. It is called whenever we start looking for a node in $S$ with a certain order profile (this means right after the value of $h$ is updated). The operation FINDT also sets $s'$ and $t'$ to the lexicographically minimal children of the current block and $f$ to $T$. It is called after a node in $S$ with the prescribed order profile has been found. The operation FINDS occurs in rows 4, 73, 130 and 147. The operation FINDT occurs in rows 37, 91, 138 and 159.

The operation NXTS sets $t'$ to the lexicographically next larger node in the current block.
It is used during the computation of the order profile of $s'$. For example, in row 22 $t$ is set to 9 because this is the next larger node after $3$ in $B(x+1)$ of $T$. The operation NXTT is defined analogously.

The operation NCS (next candidate S) sets $s'$ to the lexicographically next larger node in the current block and resets $t'$ to the lexicographically minimal child. It also resets $seq$ and $sgt$ to 0. It is called whenever the order profile of the previous node in $s'$ did not match the prescribed order profile. It occurs in rows 134, 151 and 155. 
The operation NCT is defined analogously. It occurs in rows 62, 121, 125 and 142.

The operation PUSH sets $s$ to $s'$, $t$ to $t'$, $k$ to 0 and pushes the values of $h,sgt,seq,tgt,$ $teq,f$ onto $stk$. It can be seen as a recursive call to $\textsc{cmp}$ during a cross comparison. The operation RET2 occurs when such a recursive call returns. It restores the caller environment and updates the order profile of $s'$ or $t'$ (depending on the restored value of $f$) accordingly. It sets $s,t$ to the parents of $s,t$ (the old values), $s',t'$ to $s,t$, $k$ to $|S_s|$ and restores the values of $h,sgt,\dots,f$ by popping the from $stk$. If the restored value of $f$ is S and $res$ equals $\cong$ then $seq$ is incremented by one. If $res$ equals $\succ$ then $sgt$ is incremented by one. If the restored value of $f$ is $T$ then $teq$ or $tgt$ (or neither) is incremented depending on $res$. For example, in row 12 the node 1 of $S$ has been compared with 1 of $T$. Since $S_1 \prec T_1$ neither $seq$ nor $sgt$ is incremented. In row 31 $seq$ is incremented because $S_1 \cong T_{11}$.  

The following is a high-level overview of what happens in the trace in Table~\ref{tab:ti_trace3}. In row 4 the computation of the order profile of node 1 of $S$ begins. In rows 5, 14, 23, 28 and 33 it is compared with the nodes 1, 3, 9, 11 and 13 of $T$. In row 36 its order profile is complete. Since $sgt=h$ the program continues with searching a node in $T$ in row 37. It tries node $1$ of $T$ as first candidate. In row 61 its order profile $(3,2)$ is complete. Since $tgt = 3 \neq 0 = h$, the program tries node 3 of $T$ as next candidate in row 62. In rows 63, 68, 107, 112 and 117 it is compared with the nodes of $S$. In row 120 its order profile $(2,1)$ is complete. Since $tgt = 2 \neq 0 = h$ the program tries node 9 as next candidate for $T$. In row 124 its order profile $(3,2)$ is complete. The program tries node 11 as next candidate for $T$ in row 125. In row 128 its order profile is complete and since $tgt=0=h$ and $seq=teq$  the offset $h$ is increased by 2 in row 129. In row 130 the program starts searching for nodes in $S$ and $T$ with $sgt=tgt=2$. In row 137 and 145 it finds the nodes 3 and 3 with matching order profile $(2,1)$. In row 147 it starts searching for nodes with $sgt=tgt=3$. In row 158 and 162 it finds the nodes 9 and 1 with matching order profile $(3,2)$. Since $h$ equals the cardinality of $B(x+1)$ after row 163 it follows that the blocks $B(x+1)$ of $S$ and $T$ match. Therefore the program concludes that $S_1 \cong T_1$ in row 164. During the cross comparison of the nodes 1 and 1 a cross comparison of the nodes 4 and 4 begins in row 72 and ends in row 102 where the program concludes that $S_4 \cong T_4$.

\subparagraph*{Step 5 \& 6: Synthesize and complete the control flow graph.} 
The edges induced by the three traces are described in Table~\ref{tab:ti_adj}. An entry $t$:$r$ in a cell means that the $r$-th row in Table~$t$ witnesses the respective edge. A star `$\star$' indicates an edge which is not backed up by any of the traces.

There is an edge from GC and RET to $\prec$ for the same reason that there are such edges to $\succ$. The trace for the input $S=T_2, T=S_2$ witnesses these edges. 
There is an edge from INIT to $\cong$, $\prec$ and $\succ$. The edge to $\cong$ is witnessed by the input where $S,T$ are both the single node tree. In this case the non-recursive check after initialization succeeds and there is no next block to check. Therefore the program can conclude that the trees are isomorphic. The edge to $\prec$ is witnessed by the input where $S$ is the single node tree and $T$ is the tree with two nodes. In this case the non-recursive check returns $\prec_1$. The edge to $\succ$ is witnessed by swapping $S$ and $T$. An edge from PUSH to $\cong$ is witnessed by the input where $S$ and $T$ are the 3-node tree where the root node has two children. The following tree witnesses the edges from PUSH to $\prec$ and $\succ$: $(  \: ( (  ( ) )  ) \: ( ()() ) \: )$.  

Notice that INIT, GC and PUSH share the same set of outgoing edges. From a high-level perspective, after the execution of INIT, GC or PUSH a recursive call to $\textsc{cmp}$ is made since all of them assign new values to $s$ and $t$ and set $k$ to 0. For a clearer version of the control flow graph we introduce an additional program state CMP which is a noop and has INIT, GC and PUSH as in-neighbors. This leads to the control flow shown in Figure~\ref{fig:iso_cfg}. The nodes for $\cong,\prec,\succ$ and RET, RET2 occur multiple times for better readability.

\subparagraph*{Step 7: Add edge predicates.} 
All program states with a single outgoing edge have the edge predicate $\top$. Namely, these  are START, INIT, RET, GC, PUSH and SETH, FINDS, FINDT, NXTS, NXTT, NCS, NCT. 

The edge from CMP to $\prec$ ($\succ$) is taken iff $\textsc{cmp}_1(s,t) =$  `$\prec_1$' (`$\succ_1$'). The edge from CMP to NB is taken iff    $\textsc{cmp}_1(s,t) =$ `$\cong_1$' and $s$ has a block larger than $k$. Otherwise, the edge from CMP to $\cong$ is taken.

The edge from RET to $\prec$ ($\succ$) is taken iff $res =$ `$\prec$' (`$\succ$'). The edge from RET to NB is taken iff $res = $ `$\cong$' and $s$ has a block larger than $k$. Otherwise, the edge to $\cong$ is taken.

After the execution of NB the next block size $k$ for comparison has been computed. If the  cardinality $l$ of $B(k)$ is 1 then continue with GC. Otherwise, continue with SETH.

Whether to go from $\cong,\prec,\succ$ to RET, RET2 or terminate depends on the cardinality of the block that is currently being compared and whether $s,t$ are the root nodes. If $s,t$ are the root nodes then terminate. Suppose this is not the case. Then, continue with RET iff the cardinality of $B(|S_s|)$ w.r.t.~the parent node of $s$ is 1. Otherwise, continue with RET2. 

The edge from INCH to FINDS is taken iff $h$ is smaller than the cardinality $l$ of the current block. Otherwise, it holds that $h=l$ and one continues with $\cong$. 
 
When trying to write down the edge predicates for the outgoing edges of RET2 by thinking about each one separately, it becomes apparent that this is a rather tedious and possibly redundant task. A natural alternative is to write a binary decision tree which determines what program state to visit next. Such a diagram implicitly describes the edge predicates of all outgoing edges of a given program state. A diagram for RET2 is shown in Figure~\ref{fig:bdd_ret2}. 
A dashed edge (left subtree) indicates that the predicate of the parent node does not hold. At the root node, the predicate $f=S$ describes whether the program is currently trying to find a node in $S$ (right subtree) or one in $T$ (left subtree). Let us consider the right subtree of the root node first. The predicate `$t'$ is last' means that $t'$ is the lexicographically last node in its block. If this does not hold then the order profile of $s'$ is still being computed and we have to continue with the next node for $t'$ (NXTS). Otherwise, the order profile of $s'$ is complete and we can check whether $h = sgt$. If this is the case then $s'$ is the node that we were looking for and we can continue with searching for its pendant in $T$ (FINDT). If $h \neq sgt$ then we should consider the lexicographically next sibling of $s'$ as candidate (NCS). However, this is only possible if such a next sibling exists, i.e.~the predicate `$s'$ is last' must be false. Otherwise, it holds that there is no node $v$ in the current block of $S$ with $gt_v = h$. It follows that $T_t \prec S_s$ ($\succ$). Now, suppose $f \neq S$, i.e.~we are looking for a node in $T$. If $s'$ is not the lexicographically last node in its block then the computation of the order profile of $t'$ must continue (NXTT). Otherwise, the order profile of $t'$ is complete. If $h \neq tgt$ then $t'$ is not the desired node and we have to consider the next candidate (NCT). Analogously, this is only possible if $t'$ is not the last node in its block. Otherwise, it follows that $S_s \prec T_t$ ($\prec$). If $h=tgt$ then we have to compare whether $seq=teq$ to check whether the isomorphism type of $s'$ and $t'$ occurs equally often in both subtrees. If this is the case we can set $h$ to the sum of the current order profile of either $s'$ or $t'$ since they are identical (INCH). Otherwise,  $seq < teq$ implies $S_s \prec T_t$.

\begin{figure}
	\begin{center}
		\begin{tikzpicture}[shorten >=1pt,auto,node distance=1.2cm,
main node/.style={draw},level 2/.style={sibling distance=8mm}
,level 3/.style={sibling distance=8mm}
,level 4/.style={sibling distance=8mm}
,level 5/.style={sibling distance=8mm}
,level distance=15mm
 ]


\node at (0,0) {
	\Tree [.$f=S$
	\edge[dashed];
	[.{$s'$ is last} \edge[dashed]; [.{NXTT}  ]  [.{$h=tgt$}  \edge[dashed]; [.{$t'$ is last} \edge[dashed]; [.{NCT} ] [.{$\prec$}  ]  ] [.{$seq = teq$} \edge[dashed]; [.{$seq < teq$} \edge[dashed]; [.{$\prec$}  ] [.{$\succ$}  ]  ] [.{INCH}  ]  ]  ]  ]
	[.{$t'$ is last} \edge[dashed]; [.{NXTS} ] [.{$h = sgt$} \edge[dashed]; [.{$s'$ is last} \edge[dashed]; [.{NCS}  ] [.{$\succ$}  ]  ] [.{FINDT} ]  ]  ]
	]
};

\end{tikzpicture}
	\end{center}
	\caption{Decision tree for RET2; dashed edges indicate that the predicate does not hold}
	\label{fig:bdd_ret2}
\end{figure}
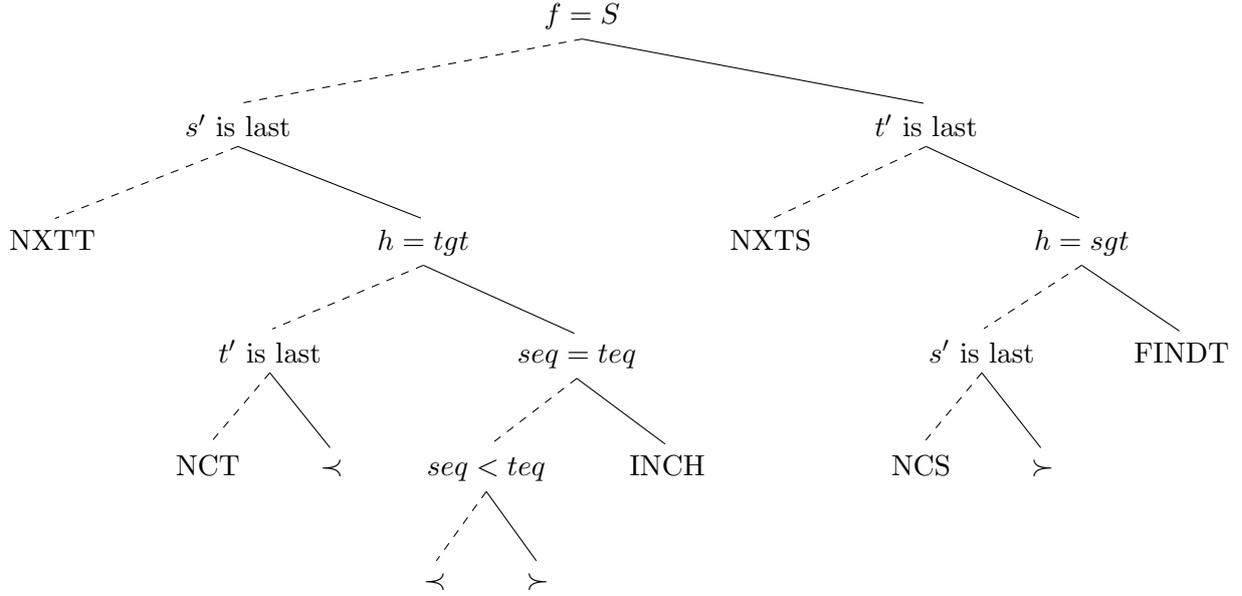

\begin{figure}
	\begin{center}
		\begin{tikzpicture}[shorten >=1pt,auto,node distance=1.2cm,
main node/.style={draw}]

\newcommand*{\xm}{1cm}%
\newcommand*{\ym}{1cm}%
\newcommand*{\xs}{0.5}%

\node[] (GHOST) at (-1.3,0) {};

\node[main node] (CMP) at (0,0) {CMP};
\node[main node, left = 0.7*\xm of CMP] (INIT) {INIT};
\node[main node, above = 1.2*\ym of INIT] (START) {START};
\node[left = 0.4*\xm of START] (PHANTOM) {};

\node[main node,fill={black!10}, below left = 3.8*\ym and -0.1*\xm of CMP] (ISOC) {$\cong$};
\node[main node,fill={black!10}, below = 0.7*\ym of ISOC] (NISO1C) {$\prec$};
\node[main node,fill={black!10}, below = 0.7*\ym of NISO1C] (NISO2C) {$\succ$};

\node[main node, left = 0.9*\xm of NISO1C] (RETA) {RET};
\node[main node, right = 0.9*\xm of NISO1C] (RET2A) {RET2};

\node[main node,fill={black!10}, below left = 1*\ym and 0.2*\xm of CMP] (NISO1A) {$\prec$};
\node[main node,fill={black!10}, left = 0.4*\xm of NISO1A] (ISOA) {$\cong$};
\node[main node,fill={black!10}, right = 0.4*\xm of NISO1A] (NISO2A) {$\succ$};

\node[main node, below = 0.7*\ym of NISO1A] (RET) {RET};

\node[main node, right = 2*\xm of CMP] (NB) {NB};
\node[main node, below left = 1*\ym and 0.5*\xm of NB] (GC) {GC};
\node[main node, below right = 1*\ym and 0.5*\xm of NB] (SETH) {SETH};

\node[main node, below left = 1*\ym and 0*\xm of SETH] (FINDS) {FINDS};
\node[main node, right = 4*\xm of FINDS] (PUSH) {PUSH};

\node[main node, below = 1.5*\ym of PUSH] (NXTT) {NXTT};
\node[main node, left = 0.4*\xm of NXTT] (NXTS) {NXTS};
\node[main node, left = 0.4*\xm of NXTS] (FINDT) {FINDT};
\node[main node, right = 0.4*\xm of NXTT] (NCS) {NCS};
\node[main node, right = 0.4*\xm of NCS] (NCT) {NCT};

\node[main node, below = 1.2*\ym of NXTT] (RET2) {RET2};
\node[main node, left = 4.2*\xm of RET2] (INCH) {INCH};

\node[main node,fill={black!10}, below = 0.6*\ym of INCH] (ISOB) {$\cong$};

\node[main node,fill={black!10}, below left = 0.6*\ym and -0.1*\xm of RET2] (NISO1B) {$\prec$};
\node[main node,fill={black!10}, below right = 0.6*\ym and  -0.1*\ym of RET2] (NISO2B) {$\succ$};

%
%
%
%
%
%
%
%
\path[->]
(CMP) edge (NB)
(CMP) edge (ISOA)
(CMP) edge (NISO1A)
(CMP) edge (NISO2A)

(RET) edge (ISOA)
(RET) edge (NISO1A)
(RET) edge (NISO2A)
(RET) edge[bend right=50] (NB)

(NB) edge (GC)
(NB) edge (SETH)
(GC) edge (CMP)

(SETH) edge (FINDS)
(FINDS) edge (PUSH)

(FINDT) edge (PUSH)
(NXTS) edge (PUSH)
(NXTT) edge (PUSH)
(NCS) edge (PUSH)
(NCT) edge (PUSH)

(RET2) edge (FINDT)
(RET2) edge (NXTS)
(RET2) edge (NXTT)
(RET2) edge (NCS) 
(RET2) edge (NCT) 

(RET2) edge (NISO1B) 
(RET2) edge (NISO2B) 

(PUSH) edge[bend right=50] (CMP)

(INCH) edge (FINDS)
(RET2) edge (INCH)
(INCH) edge (ISOB)

(INIT) edge (CMP)
(START) edge (INIT)
(PHANTOM) edge (START)

(ISOC) edge (RETA)
(ISOC) edge (RET2A)
(NISO1C) edge (RETA)
(NISO1C) edge (RET2A)
(NISO2C) edge (RETA)
(NISO2C) edge (RET2A)
;
%
%
%
\end{tikzpicture}
	\end{center}
	\caption{Control flow graph for Lindell's tree isomorphism algorithm}
	\label{fig:iso_cfg}
\end{figure}
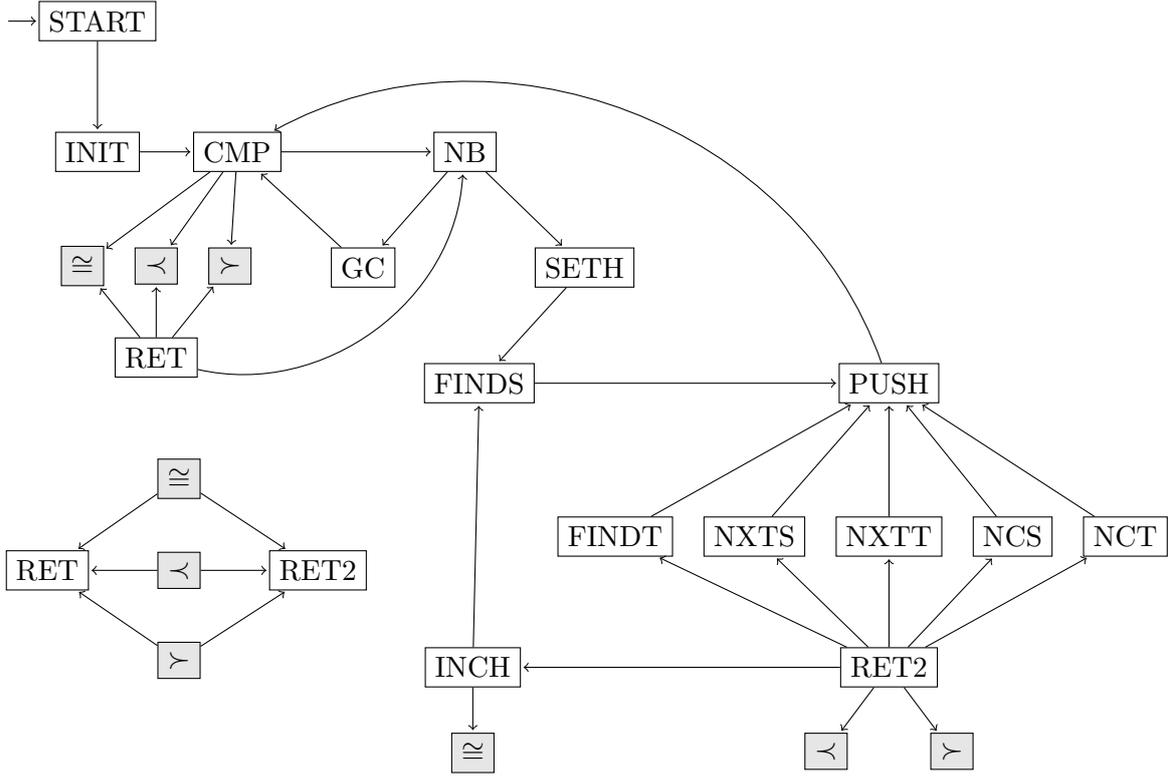

\subsection{Relation to Formal Foundation}

A generalized trace is an extended trace.
Recall that finding a program with $k$ program states for a set of extended traces $T$ is equivalent to finding a restricted $k$-coloring of the graph $G(T)$; see Definition~\ref{def:trace_graph} and Theorem~\ref{thm:color}.
A vertex of $G(T)$ corresponds to a row of a generalized trace.
A simple observation is that two vertices of $G(T)$ must receive different colors if they are associated with different operations. The synthesis procedure of our method is based on the heuristic that the converse is also true: two vertices receive the same color iff they are associated with the same operation. 
This heuristic yields a correct coloring if every operation occurs at most once in the program, i.e. there are no two program states with the same operation. It is not difficult to come up with instances where this assumption is violated. 

For example, suppose you want to implement an algorithm which increases a counter $i \leftarrow i + 1$ in two different situations A and B. This means in a program which implements this algorithm there should be two distinct program states which both are associated with the operation $i \leftarrow i + 1$. An additional variable $s$ can be introduced and for every row in a trace with the operation $i \leftarrow i + 1$ one has to add either the assignment $s \leftarrow \text{A}$ or $s \leftarrow \text{B}$. The variable $s$ is essentially used to name program states in certain cases where the operation itself is not sufficient to distinguish between program states. In a sense, our synthesis procedure can be seen as a default coloring of $G(T)$ which can be refined using $s$. An advantage of this approach is that $G(T)$ never has to be explicitly constructed; on the other hand, this implies that the programmer must be able to intuitively sense when the heuristic fails. Due to the parallel assignments the space of operations is rather rich and if one additionally refrains from reusing a variable for different purposes then the fraction of cases where this heuristic fails seems negligible.

\section{Conclusion}
\label{sec:con}
The programming method presented here has multiple advantages over ad hoc programming. 
Most importantly, it provides increased confidence in the correctness of the constructed program at the cost of writing traces.

The traces can not only serve as a form of correctness proof but also as means of documentation. For example, in the case of the confidential string matching problem one could write a program which converts the contents of the variables into a graphic similar to Figure~\ref{fig:csma}. To comprehend the program one could just watch an animated version of the traces. 
One can also query for justification of certain program parts. For example, why is there a certain edge in the control flow graph? The answer would be two consecutive rows in a trace which cause this edge (or multiple such results). Similarly, for an operation the set of examples  which it generalizes can be provided. Conversely, parts of the program which are not backed up by traces can be marked and the programmer can provide a comment as justification.  

Producing correct programs is more likely due to the fact that the programmer is not overwhelmed with a myriad of details at once---as is common in the unstructured ad hoc approach---but can separately deal with each component of the program one step at a time. Additionally, the mean time between introducing and spotting an error is reduced since every newly introduced part of the program is immediately verified against the traces. This facilitates debugging and spares the programmer tracking down error sources, which can be very time-consuming.  

Another benefit of this method is that it is language-agnostic. This means it is compatible with any high-level functional or imperative programming language. Taking this a step further, assuming a standardized set of functions is available across different programming languages, it is possible to construct a program using this method and automatically translate it into any of these languages with minimal effort. Moreover, our method requires no strong mathematical background thus making it more accessible to the average programmer.


In order to be useful for software development in practice, there needs to be software support which takes care of the menial tasks such as verifying operations and edge predicates against the traces and synthesizing the control flow. 
An important aspect that we have not considered here is that many programs interact with their environment in  more complex ways than just getting the input and then returning the output at the end of the computation. Therefore a way of integrating such interaction in this programming method without compromising its advantages needs to be developed.

\bibliographystyle{alpha}
\bibliography{lit.bib}

\pdfbookmark[section]{Appendix}{appendix}
\section*{Appendix}

\begin{proof}[Proof of Proposition~\ref{prop:ttopt}]
	``$\Rightarrow$'': Let $T$ be consistent via a program $P=(G,v_0,\alpha,\beta)$. 
	Assume for the sake of contradiction that the RHS of the `iff' does not hold. Let $T'$ be a set of traces in the quotient set $T / \sim_0$ which witnesses this.
	Assume there exist traces $\vec{s},\vec{t}$ in $T'$ such that $|\vec{s}| = 1$ and $|\vec{t}| > 1$. 
	Let $s_0$ and $t_0$ denote the first elements of $\vec{s}$ and $\vec{t}$ respectively. This contradicts that $T$ is consistent via $P$ because $s_0 \sim t_0$. This implies $|\vec{s}| > 1$ holds for all traces $\vec{s}$ in $T'$. For a trace $\vec{s}$ in $T'$ let $g_{\vec{s}}$ be the first operation executed by $P$ on input $s_0$ where $s_0$ is the first element of $\vec{s}$. Since $T'$ is consistent via $P$ it follows that $g_{\vec{s}} = g_{\vec{t}}$ holds for all $\vec{s},\vec{t}$ in $T'$. Therefore there exists an operation $g$ (any $g_{\vec{s}}$) such that for all traces $\vec{s} = (s_0,s_1,\dots,s_n)$ in $T'$ it holds that $g(s_0) = s_1$. It remains to argue that $\left\{ (\vec{s})' \mid \vec{s} \in T' \right\}$ is consistent.
	Let $v$ be the program state which is reached after executing $P$ on input $s_0$ for one step for any trace $\vec{s}$ in $T'$ and $s_0$ is the first element of $\vec{s}$; since $T'$ is a $\sim_0$-equivalence class $v$ must be the same program state irregardless of the choice of $\vec{s}$. Let $P'$ be the program that is obtained by modifying $P$ as follows. Remove all edges from the start state $v_0$ of $P$. For every edge $(v,w)$ in $G$ add an edge $(v_0,w)$ and let $\beta(v_0,w) = \beta(v,w)$. It follows that $\left\{ (\vec{s})' \mid \vec{s} \in T' \right\}$ is consistent via $P'$ since $P'(s_1) = P(s_1,v) = (\vec{s})'$ holds for all traces $\vec{s} = (s_0,s_1,\dots,s_n)$ in $T'$. 
	
	``$\Leftarrow$'': Let the RHS of the `iff' hold. We construct a program $P=(G,v_0,\alpha,\beta)$ which shows that $T$ is consistent. For every $T'$ in the quotient set $T / \sim_0$ such that every trace in $T'$ has more than one element we modify $P$ as follows. Let $g$ be an operation such that for all traces $\vec{s} = (s_0,s_1,\dots,s_n)$ in $T'$ it holds that $g(s_0) = s_1$ and let $P' = (G',v_0',\alpha',\beta')$ be a program which shows that $\left\{ (\vec{s})' \mid \vec{s} \in T' \right\}$ is consistent. We assume w.l.o.g. that the vertex set of $G'$ and $G$ are disjoint. Add a new vertex $v_{T'}$ to $G$ and an edge from the start state $v_0$ to $v_{T'}$. Let $\alpha(v_{T'}) = g$ and the edge predicate $[\beta(v_0,v_{T'})](x) = 1 \Leftrightarrow x = \predseq(s_0)$ for any $\vec{s} = (s_0,s_1,\dots,s_n)$ in $T'$.
	Additionally, add $P'$ to $P$ such that the start state of $P'$ is $v_{T'}$, i.e.~$P$ calls $P'$. 
\end{proof}
%
%

\begin{definition}	
	Let $k \in \N$ and let $T = \{ X_1,\dots,X_n \}$ be a set of extended traces. We say a function $c \colon V(G(T)) \rightarrow [k]$ is a restricted $k$-coloring of $G(T)$ if 
	no two adjacent vertices in $G(T)$ receive the same color
	and, additionally, for all $i,i' \in [n]$ and $j \in \{0,\dots,|X_i|\}$, $j' \in \{0,\dots,|X_{i'}|\}$ it holds that if 
	$$ c(v_{i,j}) = c(v_{i',j'}) \wedge s(v_{i,j}) \sim s(v_{i',j'}) $$
	then
	\underline{either} the successors of $j$ and $j'$ must receive the same color:
	$$ j < |X_i| \wedge j' < |X_{i'}| \wedge c(v_{i,j+1}) = c(v_{i',j'+1}) $$	
	\underline{or} they have no successors:
	$$  j = |X_i| \wedge j' = |X_{i'}| $$
	where $s(\cdot)$ denotes the machine state that is associated with a vertex via its corresponding line. 
	\label{def:crestr}
\end{definition}


\begin{proof}[Proof of Theorem~\ref{thm:color}]
	``$\Rightarrow$'': Assume $T=\{X_1,\dots,X_n\}$ is $k$-consistent via a program $P$ whose set of program states is $[k]$.     
	Let $c(v_{i,j})$ be the program state which is reached when executing $P$ on $s_0$ for $j$ steps where $s_0$ is the first machine state of $X_i$, for $i \in [n]$ and $j \in \{0,\dots,|X_i|\}$. We claim that $c$ is a restricted $k$-coloring of $G(T)$.
	
	First, we argue that $c$ is a coloring of $G(T)$, i.e.~no two adjacent vertices receive the same color.    
	To show this let $X,Y$ be extended traces in $T$.
	Let $n = |X|, m = |Y|$	and let $(i,s,g)$ be the $i$-th line of $X$ and $(j,t,h)$ be the $j$-th line of $Y$. Assume that both these lines are associated with the program state $v$. Assume for the sake of contradiction that these lines are unmergeable. We distinguish between the following cases depending on $(i,j)$. If $i=0$ and $j \neq 0$ then $(i,s,g)$ and $(j,t,h)$ are unmergeable. However, this cannot be the case as it contradicts that $(i,s,g)$ and $(j,t,h)$ are both associated with the same program state $v$. The $0$-th line of $X$ must be associated with the start state of $P$ whereas the $j$-th line of $Y$ must be associated with a program state which is not the start state since $j \neq 0$. For the same reason the case $i\neq 0, j=0$ cannot occur. Therefore we can assume that $(i,j) \in ([n] \times [m]) \cup \{(0,0)\}$. Since $(i,s,g)$ and $(j,t,h)$ are unmergeable it must hold that $U_{i,j} = 1$. This is the case if $g \neq h$ or $s \sim t$ and $U_{i+1,j+1} = 1$. Since $(i,s,g)$ and $(j,t,h)$ are both associated with the same program state it must hold that $g = h$. Therefore it must be the case that $s \sim t$ and $U_{i+1,j+1} = 1$. Since $U_{n+1,m+1} = 0$ by definition it cannot be the case that $i=n,j=m$. Assume that $i+1 = n+1$ and $j+1 \leq m$. Since $i+1 = n+1$ this means $(i,s,g)$ is the last line of $X$ whereas $(j,t,h)$ is not the last line of $Y$ since $j+1 \leq m$. Since $s \sim t$ and both lines lead to the same program state this means $P$ has to behave identically in both cases. However, in order to be consistent with $X$ the program $P$ would have to terminate whereas in the case of $Y$ the program $P$ would have to execute at least one more operation. Therefore these values for $(i,j)$ are not possible. For the same reason it cannot hold that $i+1 \leq n$ and $j+1 = m +1$. It remains to consider the case $i+1 \leq n, j+1 \leq m$. In this case the $i$-th line of $X$ and the $j$-th line of $Y$ are unmergeable iff the $(i+1)$-th line of $X$ and the $(j+1)$-th line of $Y$ are unmergeable. Moreover, the $(i+1)$-th line of $X$ and the $(j+1)$-th line of $Y$ must be associated with the same program state since $s \sim t$. This means we can apply the same argument for the $(i+1)$-th line of $X$ and the $(j+1)$-th line of $Y$. Following this recursive chain of equivalences eventually leads to one of the three contradictory cases discussed before ($i+1=n+1, j+1\leq m$; $i+1\leq n, j+1=m+1$; $i+1=n+1,j+1=m+1$). 
	
	It remains to argue that $c$ has the restricted property. Consider two vertices $v_{i,j}, v_{i',j'}$ with the same color and their associated machine states are indistinguishable. In both cases $P$ behaves identically because they are associated with the same program state and they are indistinguishable. This means $P$ either terminates (both lines are the last) or it continues with the same program state (the successor lines have the same color).    
	
	``$\Leftarrow$'': Assume $T=\{X_1,\dots,X_n \}$ is consistent and 
	$c \colon V(G(T)) \rightarrow [k]$ is a restricted $k$-coloring of $G(T)$.    
	We claim that $V_0 := \set{v_{i,0}}{ i \in [n]}$ is an independent set in $G(T)$ and for all $u \in V_0$ and $v \notin V_0$ it holds that $\{u,v\}$ is an edge in $G(T)$. For the first part suppose there exist $i,j \in [n]$ such that $v_{i,0}$ and $v_{j,0}$ are adjacent. This means the $0$-th line of $X_i$ and the $0$-th line of $X_j$ are unmergeable. However, this contradicts that $T$ is consistent. The second part holds because the $0$-th line and $j$-th line of some extended traces with $j > 0$ are unmergeable by definition. Therefore we can assume that the $0$-th lines all receive the same color which is distinct from all other lines, i.e.~$c(u) = c(v)$ for all $u,v \in V_0$ and $c(u) \neq c(v)$ for all $u \in V_0$ and $v \notin V_0$. Moreover, we assume w.l.o.g.~that every color is used at least once, i.e.~the image of $c$ is $[k]$.  
	
	\textbf{Construction.}
	We construct a program $P=(G,v_0,\alpha,\beta)$ with $k$ states which shows that $T$ is $k$-consistent. The control flow graph $G$ has $[k]$ as vertex set and the start state is $c(v_{1,0})$. 
	
	Before we define the other components of $P$ we introduce some auxiliary sets $L(\cdot)$ and $L(\cdot,\cdot)$.
	For a vertex $u$ of $G$ let $L(u)$ be the set of tuples $(i,j,s,g)$ such that $(j,s,g)$ is the $j$-th line of $X_i$ and $c(v_{i,j}) = u$, i.e.~the set of lines associated with the color $u$. For $u,v \in V(G)$ and $(i,j,s,g) \in L(u), (i',j',s',g') \in L(v)$ we say $(i,j,s,g)$ and $(i',j',s',g')$ are consecutive if $i=i'$ (the lines are from the same trace) and $j+1 = j'$. For $u,v \in V(G)$ let $L(u,v)$ be the set of tuples $(\ell_u,\ell_v)$ with $\ell_u \in L(u), \ell_v \in L(v)$ and $\ell_u$ and $\ell_v$ are consecutive. 
	
	The function $\alpha$ is defined as follows. For a non-start state $u \in V(G)$ choose some $(i,j,s,g) \in L(u)$; since every color is assumed to be used at least once such a tuple must exist. Let $\alpha(u) = g$. This is well-defined because the choice of $(i,j,s,g)$ is irrelevant. Stated differently, for all $(i,j,s,g),(i',j',s',g') \in L(u)$ it holds that $g = g'$. Suppose this would not be the case, i.e.~$g \neq g'$. This would imply that the $j$-th line of $X_i$ and the $j'$-th line of $X_{i'}$ are unmergeable and therefore $v_{i,j}$ and $v_{i',j'}$ are adjacent in $G(T)$. This contradicts that these two vertices have the same color. 
	There is an edge $(u,v)$ in $G$ if $L(u,v)$ is non-empty. 
	Let $(u,v)$ be an edge in $G$. The edge predicate of $(u,v)$ is defined as follows. Let $[\beta(u,v)](x) = 1$ if there exists $(\ell_u,\ell_v) \in L(u,v)$ with $\ell_u = (\cdot,\cdot,s,\cdot)$ such that $x = \predseq(s)$. The restricted property of $c$ guarantees that $P$ is deterministic.    
\end{proof}

\begin{proof}[Proof of Proposition~\ref{prop:ftrepr}]
	We assume w.l.o.g.~that $P$ is minimal and has $k$ states; otherwise replace $P$ by an equivalent program which is minimal. Let $\mathbb{P}_k$ be the the set of programs with $k$ states up to equivalence. Let $I_k$ be the set of machine states such that for every $P \neq P' \in \mathbb{P}_k$ there exists $s \in I_k$ with $P(s) \neq P'(s)$. Stated differently, two programs $P,P'$ with $k$ program states are equivalent iff $P(s) = P'(s)$ holds for all $s \in I_k$. Let $T$ be the set of traces $P(s)$ for every $s \in I_k$. Let $P'$ be a program which witnesses that $T$ is consistent and $P'$ is minimal. It holds that $P'$ has $k$ states. Assume that $P'$ and $P$ are not equivalent. This implies that there exists an $s \in I_k$ such that $P'(s) \neq P(s)$. This contradicts that $P(s)$ is in $T$.
\end{proof}


\begin{table}[h!]
	\caption{CSM: Trace for input from Figure \ref{fig:csma}}    
	\begin{center}
		\rowcolors{2}{gray!15}{white}
\begin{tabular}{ r | l l l l l l }
	& $ca$ & $cb$ & $oa$ & $ob$ & $l$ & $r$ \\
	\hline 
	1  & 1	& 1	 & 0  & 0  &   &  \Tstrut \Bstrut \\ 
	2  & 	& 	 &    &    & 2 &   \Tstrut \Bstrut \\  
	3  & 	& 2	 & 2  & 0  &   & \Tstrut \Bstrut \\ 
	4  & 	& 	 &    &    & 2 &  \Tstrut \Bstrut \\   
	5  & 2	& 	 & 0  & 2  &   & \Tstrut \Bstrut \\  
	6  & 	& 	 &    &    & 1 &  \Tstrut \Bstrut \\  
	7  & 	& 3	 & 1  & 0  &   & \Tstrut \Bstrut \\    
	8  & 	& 	 &    &    & 1 &  \Tstrut \Bstrut \\  
	9  & 3	& 	 & 0  & 1  &   & \Tstrut \Bstrut \\     
	10 & 	& 	 &    &    & 2 &  \Tstrut \Bstrut \\   
	11 & 4	& 4	 & 0  & 0  &   & \Tstrut \Bstrut \\      
	12 &  	& 	 &    &    & 2 & \Tstrut \Bstrut \\   
	13 &  	& 	 &    &    &   & $\top$ \Tstrut \Bstrut \\    
\end{tabular}
	\end{center}            
	\label{tab:ltrace1}
\end{table}

\begin{table}[h!]
	\caption{CSM: Generalized trace for input from Figure \ref{fig:csma}}    
	\begin{center}
		\rowcolors{2}{gray!15}{white}
\begin{tabular}{ r | l l l l l l }
    & $ca$ & $cb$ & $oa$ & $ob$ & $l$ & $r$ \\
    \hline 
    1  & 1 := 1	& 1 := 1	 & 0 := 0  & 0 := 0  & &   \Tstrut \Bstrut \\ 
    2  & 	& 	 &    &    & 2 := $|B[cb]| - ob$ &  \Tstrut \Bstrut \\  
    3  & 	& 2 := $cb+1$	 & 2 := $oa+l$  & 0 := 0  & &   \Tstrut \Bstrut \\ 
    4  & 	& 	 &    &    & 2 := $|A[ca]| - oa$ &  \Tstrut \Bstrut \\   
    5  & 2 := $ca+1$	& 	 & 0 := 0  & 2 := $ob+l$  & &   \Tstrut \Bstrut \\  
    6  & 	& 	 &    &    & 1 := $|B[cb]| - ob$ & \Tstrut \Bstrut \\  
    7  & 	& 3 := $cb+1$	 & 1 := $oa+l$  & 0 := 0  & &   \Tstrut \Bstrut \\    
    8  & 	& 	 &    &    & 1 := $|A[ca]| - oa$ & \Tstrut \Bstrut \\  
    9  & 3 := $ca+1$	& 	 & 0 := 0  & 1 := $ob+l$  & &   \Tstrut \Bstrut \\     
    10 & 	& 	 &    &    & 2 := $|A[ca]| - oa$&   \Tstrut \Bstrut \\   
    11 & 4 := $ca+1$	& 4 := $cb+1$	 & 0 := 0  & 0 := 0  & &   \Tstrut \Bstrut \\      
    12 &  	& 	 &    &    & 2 := $|A[ca]| - oa$&   \Tstrut \Bstrut \\   
    13 &  	& 	 &    &    & &  $\top$ := $\top$ \Tstrut \Bstrut \\   
\end{tabular}
	\end{center}            
	\label{tab:btrace1}
\end{table}

\begin{table}[h!]
	\caption{CSM: Trace for $A = [\text{a, a, a}], B = [\text{aaa}] $}     
	\begin{center}
\rowcolors{2}{gray!15}{white}
\begin{tabular}{ r | l | l l l l l l }
    & Name & $ca$ & $cb$ & $oa$ & $ob$ & $l$ & $r$ \\
    \hline 
 1  & INIT & 1	& 1	 & 0  & 0  &   &  \Tstrut \Bstrut \\ 
 2  & ALEN & 	& 	 &    &    & 1 &   \Tstrut \Bstrut \\  
 3  & ANBS & 2	& 	 & 0  & 1  &   & \Tstrut \Bstrut \\ 
 4  & ALEN & 	& 	 &    &    & 1 &  \Tstrut \Bstrut \\   
 5  & ANBS & 3	& 	 & 0  & 2  &   & \Tstrut \Bstrut \\  
 6  & ALEN & 	& 	 &    &    & 1 &  \Tstrut \Bstrut \\  
 7  & YES &  	& 	 &    &    &   & $\top$ \Tstrut \Bstrut \\     
\end{tabular}
	\end{center}            
	\label{tab:ltrace2}
\end{table}

\begin{table}[h!]
	\caption{CSM: Trace for $A = [\text{ba, a}], B = [\text{b, ab}] $}     
	\begin{center}
\rowcolors{2}{gray!15}{white}
\begin{tabular}{ r | l | l l l l l l }
    & Name & $ca$ & $cb$ & $oa$ & $ob$ & $l$ & $r$ \\
    \hline 
    1 & INIT & 1	& 1	 & 0  & 0  &   &  \Tstrut \Bstrut \\ 
    2 & BLEN  & 	& 	 &    &    & 1 &   \Tstrut \Bstrut \\  
    3 & ASBN & 	& 2	 & 1   & 0  &  &   \Tstrut \Bstrut \\    
    4 & ALEN & 	& 	 &    &    & 1 &   \Tstrut \Bstrut \\  
    5 & ANBS & 2	& 	 & 0   & 1  &  &   \Tstrut \Bstrut \\    
    6 & BLEN & 	& 	 &    &    & 1 &   \Tstrut \Bstrut \\        
    7 & NO &  	& 	 &    &    &   & $\bot$ \Tstrut \Bstrut \\    
\end{tabular}
	\end{center}            
	\label{tab:ltrace3}
\end{table}

\begin{table}[h!]
	\caption{CSM: Trace for $A = [\text{a}], B = [\text{aa}] $}    
	\begin{center}
\rowcolors{2}{gray!15}{white}
\begin{tabular}{ r | l | l l l l l l }
    & Name & $ca$ & $cb$ & $oa$ & $ob$ & $l$ & $r$ \\
    \hline 
 1 & NO & 	& 	 &    &    &  & $\bot$  \Tstrut \Bstrut \\   
\end{tabular}
	\end{center}            
	\label{tab:ltrace4}
\end{table}

\begin{table}[h!]
	\caption{CSM: Operations}    
	\begin{center}
		\rowcolors{2}{gray!15}{white}
\begin{tabular}{ l | l l l l l l }
    Name & $ca$ & $cb$ & $oa$ & $ob$ & $l$ & $r$ \\
    \hline 
    INIT  & 1	& 1	 & 0  & 0  & &  \Tstrut \Bstrut \\
    ALEN  & 	& 	 &    &    & $|A[ca]| - oa$ &  \Tstrut \Bstrut \\ 
    BLEN  & 	& 	 &    &    & $|B[cb]| - ob$ &  \Tstrut \Bstrut \\     
    ASBN  & 	& $cb+1$ & $oa+l$  & 0  & &  \Tstrut \Bstrut \\
    ANBS  & $ca+1$	&  & 0  & $ob+l$  & &   \Tstrut \Bstrut \\    
    ANBN & $ca+1$	& $cb+1$	 & 0  &  0  & &   \Tstrut \Bstrut \\     
    YES &  	& 	 &    &    & & $\top$ \Tstrut \Bstrut \\  
    NO &  	& 	 &    &    & & $\bot$   \Tstrut \Bstrut \\
\end{tabular}
	\end{center}            
	\label{tab:stateops}
\end{table}

\begin{table}[h!]
	\caption{CSM: Predicates}    
	\begin{center}
		\rowcolors{2}{gray!15}{white}
\begin{tabular}{ l | l }
	Name & Expression  \\
	\hline 
	EQLEN &  $\sum_{i=1}^{|A|} |A[i]| = \sum_{i=1}^{|B|} |B[i]|$ \TTstrut \Bstrut
	\\[5pt]
	SS &  $\mathrm{substr}(A[ca],oa,l) = \mathrm{substr}(B[cb],ob,l)$ \Tstrut \Bstrut \\
	ALEQ & $ |A[ca]| - oa  \leq |B[cb]| - ob  $ \Tstrut \Bstrut \\
	EOA & $ |A[ca]| - oa = l$  \Tstrut \Bstrut \\
	EOB & $ |B[cb]| - ob = l$    \Tstrut \Bstrut \\   
	LASTA & $ ca = |A|$    \Tstrut \Bstrut \\    
	LASTB & $ cb = |B|$    \Tstrut \Bstrut \\ 
	FC & EOA $\wedge$ EOB $\wedge$ LASTA $\wedge$ LASTB \Tstrut \Bstrut \\
\end{tabular}
	\end{center}            
	\label{tab:predicates}
\end{table}

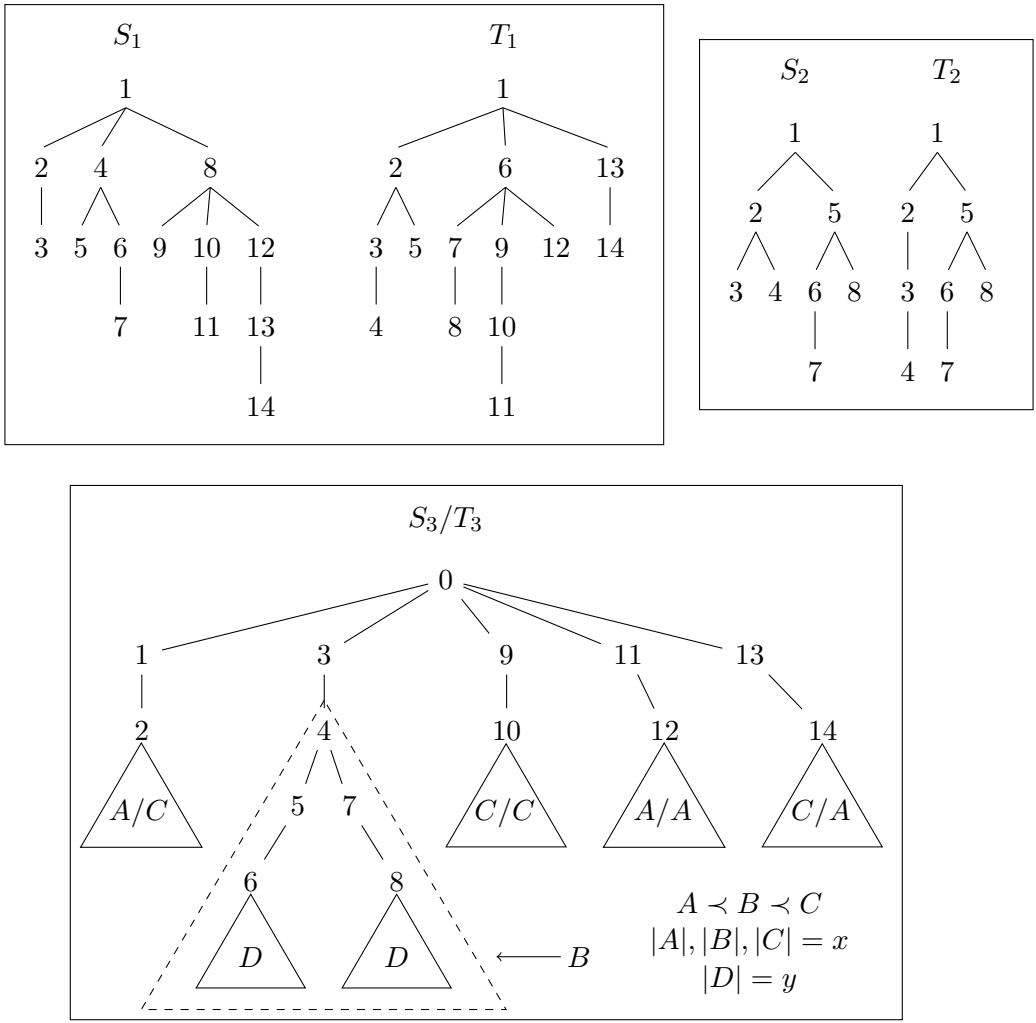
\begin{figure}[h!]
	\begin{center}
		\begin{tikzpicture}[shorten >=1pt,auto,node distance=1.2cm,
main node/.style={draw},triangle/.style = {draw, regular polygon, regular polygon sides=3 }]


\node[main node] at (0,0) {
\begin{tikzpicture}[shorten >=1pt,auto,node distance=1.2cm,
main node/.style={draw},triangle/.style = {draw, regular polygon, regular polygon sides=3 }]
\usetikzlibrary{shapes.geometric}

\node at (-2.35,2.8) {$S_1$};
\node at (-2,0) {
\Tree [.$1$
[.$2$ [.$3$ ] ]
[.$4$ [.$5$ ] [.$6$  [.$7$ ] ]  ]
[.$8$ [.$9$ ] [.$10$ [.$11$ ] ] [.$12$ [.$13$ [.$14$ ] ] ] ]
]
};

\node at (2.6,2.8) {$T_1$};
\node at (2.5,0) {
    \Tree [.$1$
    [.$2$ [.$3$  [.$4$ ] ] [.$5$ ]   ]
    [.$6$ [.$7$ [.$8$ ] ] [.$9$ [.$10$ [.$11$ ] ] ] [.$12$ ] ]    
    [.$13$ [.$14$ ] ]
    ]
};

\end{tikzpicture}
};

\node[main node] at (7,0) {
    \begin{tikzpicture}[shorten >=1pt,auto,node distance=1.2cm,
main node/.style={draw},triangle/.style = {draw, regular polygon, regular polygon sides=3 }]

\node at (-1,2.4) {$S_2$};
\node at (-1,0) {
\Tree [.$1$ [.$2$ [.$3$ ] [.$4$ ] ] [.$5$ [.$6$ [.$7$ ] ] [.$8$ ]  ] ]
};

\node at (1,2.4) {$T_2$};
\node at (1,0) {
    \Tree [.$1$ [.$2$  [.$3$  [.$4$ ] ] ] [.$5$ [.$6$ [.$7$ ] ] [.$8$ ]  ]  ]
};
\end{tikzpicture}
};

\node[main node] at (2,-7) {
    \begin{tikzpicture}[shorten >=1pt,auto,node distance=1.2cm,
main node/.style={draw},triangle/.style = {draw, regular polygon, regular polygon sides=3 }]

\newcommand*{\xa}{4cm}%
\newcommand*{\ya}{1cm}%

\newcommand*{\xb}{4cm}%
\newcommand*{\yb}{-2.6cm}%

\newcommand*{\xlen}{1.6cm}%
\newcommand*{\ylen}{1cm}%

 
\draw[dashed] ({\xa-1*\xlen},{\ya-1.6*\ylen}) --
 ({\xa-2.5*\xlen-0.2},{\ya-5.7*\ylen}) -- 
 ({\xa+0.5*\xlen-0.2},{\ya-5.7*\ylen}) -- 
 cycle;

\node (tlabel) at ({\xa},1.8) {$S_3$/$T_3$};

\node (srn) at ({\xa},{\ya}) {$0$};

\node (s1) at ({\xa-2.5*\xlen},{\ya-1*\ylen}) {$1$};
\node (s2) at ({\xa-1*\xlen},{\ya-1*\ylen}) {$3$};
\node (s3) at ({\xa+0.5*\xlen},{\ya-1*\ylen}) {$9$};
\node (s4) at ({\xa+1.5*\xlen},{\ya-1*\ylen}) {$11$};
\node (s5) at ({\xa+2.5*\xlen},{\ya-1*\ylen}) {$13$};

\node (sb1) at ({\xa-2.5*\xlen},{\ya-2*\ylen}) {$2$};
\node (sb2) at ({\xa-1*\xlen},{\ya-2*\ylen}) {$4$};
\node (sb3) at ({\xa+0.5*\xlen},{\ya-2*\ylen}) {$10$};
\node (sb4) at ({\xa+(1.5+0.3)*\xlen},{\ya-2*\ylen}) {$12$};
\node (sb5) at ({\xa+(2.5+0.6)*\xlen},{\ya-2*\ylen}) {$14$};

\node[triangle] (s1ta) at ({\xa-2.5*\xlen-0.2},{\ya-(3*\ylen)-2.5}) {\phantom{$A.$}};
\node[triangle] (s1tb) at ({\xa+0.5*\xlen-0.2},{\ya-3*\ylen-2.5}) {\phantom{$C.$}};
\node[triangle] (s1tc) at ({\xa+(1.5+0.3)*\xlen-0.2},{\ya-3*\ylen-2.5}) {\phantom{$A.$}};
\node[triangle] (s1td) at ({\xa+(2.5+0.6)*\xlen-0.2},{\ya-3*\ylen-2.5}) {\phantom{$C.$}};

\node (s1tal) at ({\xa-2.5*\xlen-0.2-1},{\ya-3*\ylen-2.5}) {$A$/$C$};
\node (s1tbl) at ({\xa+0.5*\xlen-0.2-0.5},{\ya-3*\ylen-2.5}) {$C$/$C$};
\node (s1tcl) at ({\xa+(1.5+0.3)*\xlen-0.2-0.5},{\ya-3*\ylen-2.5}) {$A$/$A$};
\node (s1tdl) at ({\xa+(2.5+0.6)*\xlen-0.2-0.5},{\ya-3*\ylen-2.5}) {$C$/$A$};

\node (sc5) at ({\xa-1.3*\xlen-0.2+4},{\ya-3*\ylen}) {$5$};
\node (sc7) at ({\xa-0.7*\xlen-0.2-4},{\ya-3*\ylen}) {$7$};

\node (sc6) at ({\xa-1.6*\xlen-0.2},{\ya-4*\ylen}) {$6$};
\node (sc8) at ({\xa-0.4*\xlen-0.2},{\ya-4*\ylen}) {$8$};

\node[triangle] (sc6t) at ({\xa-1.6*\xlen-0.2},{\ya-5*\ylen}) {$D$};
\node[triangle] (sc6t) at ({\xa-0.4*\xlen-0.2},{\ya-5*\ylen}) {$D$};

\node[] (sc6t) at ({\xa+1.1*\xlen-0.2},{\ya-5*\ylen}) {$B$};

\draw[->] ({\xa+0.95*\xlen-0.2},{\ya-5*\ylen}) -- ({\xa+0.4*\xlen-0.2},{\ya-5*\ylen});

\node[] (sc6t) at ({\xa+2.5*\xlen-0.2},{\ya-4.3*\ylen}) {$A \prec B \prec C$};
\node[] (sc6t) at ({\xa+2.5*\xlen-0.2},{\ya-(4.3+0.5)*\ylen}) {$|A|, |B|, |C| = x$};
\node[] (sc6t) at ({\xa+2.5*\xlen-0.2},{\ya-(4.3+1)*\ylen}) {$|D| = y$};

\path
(srn) edge (s3)
(srn) edge (s2)
(srn) edge (s1)
(srn) edge (s4)
(srn) edge (s5)

(s1) edge (sb1)
(s2) edge (sb2)
(s3) edge (sb3)
(s4) edge (sb4)
(s5) edge (sb5)

(sb2) edge (sc5)
(sb2) edge (sc7)

(sc5) edge (sc6)
(sc7) edge (sc8)
;

\end{tikzpicture}
};

\end{tikzpicture}
	\end{center}
	\caption{Tree Isomorphism: Input trees for the traces}
	\label{fig:inp_trees}
\end{figure}

\begin{table}[h!]
	\parbox[t]{.45\linewidth}{
		\centering 
		\caption{Tree Isomorphism: Trace for $S_1,T_1$ from Figure \ref{fig:inp_trees}}   \vspace{0.35cm}
		\rowcolors{2}{gray!15}{white}
\begin{tabular}{ r | l | l l l l l l }
     & Name & $s$ & $t$ & $k$ & $res$ \\
    \hline 
1 & INIT & 1 & 1 & 0 & \Tstrut \Bstrut \\ 
2 & NB &  &  & 2 & \Tstrut \Bstrut \\ 
3 & GC & 2 & 13 & 0 & \Tstrut \Bstrut \\ 
4 & NB &  &  & 1 & \Tstrut \Bstrut \\ 
5 & GC & 3 & 14 & 0 & \Tstrut \Bstrut \\ 
6 & $\cong$ &  &  &  & $\cong$ \Tstrut \Bstrut \\ 
7 & RET & 2 & 13 & 1 & \Tstrut \Bstrut \\ 
8 & $\cong$ &  &  &  & $\cong$ \Tstrut \Bstrut \\ 
9 & RET & 1 & 1 & 2 & \Tstrut \Bstrut \\ 
10 & NB &  &  & 4 & \Tstrut \Bstrut \\ 
11 & GC & 4 & 2 & 0 & \Tstrut \Bstrut \\ 
12 & NB &  &  & 1 & \Tstrut \Bstrut \\ 
13 & GC & 5 & 5 & 0 & \Tstrut \Bstrut \\ 
14 & $\cong$ &  &  &  & $\cong$ \Tstrut \Bstrut \\ 
15 & RET & 4 & 2 & 1 & \Tstrut \Bstrut \\ 
16 & NB &  &  & 2 & \Tstrut \Bstrut \\ 
17 & GC & 6 & 3 & 0 & \Tstrut \Bstrut \\ 
18 & NB &  &  & 1 & \Tstrut \Bstrut \\ 
19 & GC & 7 & 4 & 0 & \Tstrut \Bstrut \\ 
20 & $\cong$ &  &  &  & $\cong$ \Tstrut \Bstrut \\ 
21 & RET & 6 & 3 & 1 & \Tstrut \Bstrut \\ 
22 & $\cong$ &  &  &  & $\cong$ \Tstrut \Bstrut \\ 
23 & RET & 4 & 2 & 2 & \Tstrut \Bstrut \\ 
24 & $\cong$ &  &  &  & $\cong$ \Tstrut \Bstrut \\ 
25 & RET & 1 & 1 & 4 & \Tstrut \Bstrut \\ 
26 & NB &  &  & 7 & \Tstrut \Bstrut \\ 
27 & GC & 8 & 6 & 0 & \Tstrut \Bstrut \\ 
… &  &  &  &  & \Tstrut \Bstrut \\ 
29 & $\cong$ &  &  &  & $\cong$ \Tstrut \Bstrut \\ 
30 & RET & 1 & 1 & 7 & \Tstrut \Bstrut \\ 
31 & $\cong$ &  &  &  & $\cong$ \Tstrut \Bstrut \\ 
\end{tabular}	
		\label{tab:ti_trace1}
	}
	\hfill 
	\parbox[t]{.45\linewidth}{		
		\centering
		\caption{Tree Isomorphism: Trace for $S_2,T_2$ from Figure \ref{fig:inp_trees}} \vspace{0.35cm}    
		\rowcolors{2}{gray!15}{white}
\begin{tabular}{ r | l | l l l l l l }
     & Name & $s$ & $t$ & $k$ & $res$ \\
    \hline 
1 & INIT & 1 & 1 & 0 & \Tstrut \Bstrut \\ 
2 & NB &  &  & 3 & \Tstrut \Bstrut \\ 
3 & GC & 2 & 2 & 0 & \Tstrut \Bstrut \\ 
4 & $\succ$ &  &  &  & $\succ$ \Tstrut \Bstrut \\ 
5 & RET & 1 & 1 & 3 & \Tstrut \Bstrut \\ 
6 & $\succ$ &  &  &  & $\succ$ \Tstrut \Bstrut \\ 
\end{tabular}
		\label{tab:ti_trace2}
	}
\end{table}

\clearpage
\begin{center}	
\rowcolors{2}{gray!15}{white}
\begin{longtable}{ r | l | l l l l l l l l l l l l l l l }
\caption{Tree Isomorphism: Trace for $S_3,T_3$ from Figure~\ref{fig:inp_trees}\label{tab:ti_trace3}} \\[5ex]
 & Name & $s$ & $t$ & $k$ & $res$ & $s'$ & $t'$ & $h$ & $sgt$ & $seq$ & $tgt$ & $teq$ & $f$ & $stk$  \\
 \hline 
 \endhead
1 & INIT & 0 & 0 & 0 &  &  &  &  &  &  &  &  &  & []  \Tstrut \Bstrut \\ 
2 & NB &  &  & $x+1$ &  &  &  &  &  &  &  &  &  &   \Tstrut \Bstrut \\ 
3 & SETH &  &  &  &  &  &  & 0 &  &  &  &  &  &   \Tstrut \Bstrut \\ 
4 & FINDS &  &  &  &  & 1 & 1 &  & 0 & 0 & 0 & 0 & S &   \Tstrut \Bstrut \\ 
5 & PUSH & 1 & 1 & 0 &  &  &  &  &  &  &  &  &  & [(0,0,0,0,0,S)]  \Tstrut \Bstrut \\ 
6 & NB &  &  & $x$ &  &  &  &  &  &  &  &  &  &   \Tstrut \Bstrut \\ 
7 & GC & 2 & 2 & 0 &  &  &  &  &  &  &  &  &  &   \Tstrut \Bstrut \\ 
$\dots$ &  &  &  &  &  &  &  &  &  &  &  &  &  &   \Tstrut \Bstrut \\ 
9 & $\prec$ &  &  &  & $\prec$ &  &  &  &  &  &  &  &  &   \Tstrut \Bstrut \\ 
10 & RET & 1 & 1 & $x$ &  &  &  &  &  &  &  &  &  &   \Tstrut \Bstrut \\ 
11 & $\prec$ &  &  &  & $\prec$ &  &  &  &  &  &  &  &  &   \Tstrut \Bstrut \\ 
12 & RET2 & 0 & 0 & $x+1$ &  & 1 & 1 & 0 & 0 & 0 & 0 & 0 & S & []  \Tstrut \Bstrut \\ 
13 & NXTS &  &  &  &  & 1 & 3 &  &  &  &  &  &  &   \Tstrut \Bstrut \\ 
14 & PUSH & 1 & 3 & 0 &  &  &  &  &  &  &  &  &  & [(0,0,0,0,0,S)]  \Tstrut \Bstrut \\ 
15 & NB &  &  & $x$ &  &  &  &  &  &  &  &  &  &   \Tstrut \Bstrut \\ 
16 & GC & 2 & 4 & 0 &  &  &  &  &  &  &  &  &  &   \Tstrut \Bstrut \\ 
$\dots$ &  &  &  &  &  &  &  &  &  &  &  &  &  &   \Tstrut \Bstrut \\ 
18 & $\prec$ &  &  &  & $\prec$ &  &  &  &  &  &  &  &  &   \Tstrut \Bstrut \\ 
19 & RET & 1 & 3 & $x$ &  &  &  &  &  &  &  &  &  &   \Tstrut \Bstrut \\ 
20 & $\prec$ &  &  &  & $\prec$ &  &  &  &  &  &  &  &  &   \Tstrut \Bstrut \\ 
21 & RET2 & 0 & 0 & $x+1$ &  & 1 & 3 & 0 & 0 & 0 & 0 & 0 & S & []  \Tstrut \Bstrut \\ 
22 & NXTS &  &  &  &  & 1 & 9 &  &  &  &  &  &  &   \Tstrut \Bstrut \\ 
23 & PUSH & 1 & 9 & 0 &  &  &  &  &  &  &  &  &  & [(0,0,0,0,0,S)]  \Tstrut \Bstrut \\ 
$\dots$ &  &  &  &  &  &  &  &  &  &  &  &  &  &   \Tstrut \Bstrut \\ 
25 & $\prec$ &  &  &  & $\prec$ &  &  &  &  &  &  &  &  &   \Tstrut \Bstrut \\ 
26 & RET2 & 0 & 0 & $x+1$ &  & 1 & 9 & 0 & 0 & 0 & 0 & 0 & S & []  \Tstrut \Bstrut \\ 
27 & NXTS &  &  &  &  & 1 & 11 &  &  &  &  &  &  &   \Tstrut \Bstrut \\ 
28 & PUSH & 1 & 11 & 0 &  &  &  &  &  &  &  &  &  & [(0,0,0,0,0,S)]  \Tstrut \Bstrut \\ 
$\dots$ &  &  &  &  &  &  &  &  &  &  &  &  &  &   \Tstrut \Bstrut \\ 
30 & $\cong$ &  &  &  & $\cong$ &  &  &  &  &  &  &  &  &   \Tstrut \Bstrut \\ 
31 & RET2 & 0 & 0 & $x+1$ &  & 1 & 11 & 0 & 0 & 1 & 0 & 0 & S &   \Tstrut \Bstrut \\ 
32 & NXTS &  &  &  &  & 1 & 13 &  &  &  &  &  &  &   \Tstrut \Bstrut \\ 
33 & PUSH & 1 & 13 & 0 &  &  &  &  &  &  &  &  &  & [(0,0,1,0,0,S)]  \Tstrut \Bstrut \\ 
$\dots$ &  &  &  &  &  &  &  &  &  &  &  &  &  &   \Tstrut \Bstrut \\ 
35 & $\cong$ &  &  &  & $\cong$ &  &  &  &  &  &  &  &  &   \Tstrut \Bstrut \\ 
36 & RET2 & 0 & 0 & $x+1$ &  & 1 & 13 & 0 & 0 & 2 & 0 & 0 & S &   \Tstrut \Bstrut \\ 
37 & FINDT &  &  &  &  & 1 & 1 &  &  &  &  &  & T &   \Tstrut \Bstrut \\ 
38 & PUSH & 1 & 1 & 0 &  &  &  &  &  &  &  &  &  & [(0,0,2,0,0,T)]  \Tstrut \Bstrut \\ 
$\dots$ &  &  &  &  &  &  &  &  &  &  &  &  &  &   \Tstrut \Bstrut \\ 
40 & $\prec$ &  &  &  & $\prec$ &  &  &  &  &  &  &  &  &   \Tstrut \Bstrut \\ 
41 & RET2 & 0 & 0 & $x+1$ &  & 1 & 1 & 0 & 0 & 2 & 1 & 0 & T & []  \Tstrut \Bstrut \\ 
42 & NXTT &  &  &  &  & 3 & 1 &  &  &  &  &  &  &   \Tstrut \Bstrut \\ 
43 & PUSH & 3 & 1 & 0 &  &  &  &  &  &  &  &  &  & [(0,0,2,1,0,T)]  \Tstrut \Bstrut \\ 
$\dots$ &  &  &  &  &  &  &  &  &  &  &  &  &  &   \Tstrut \Bstrut \\ 
45 & $\prec$ &  &  &  & $\prec$ &  &  &  &  &  &  &  &  &   \Tstrut \Bstrut \\ 
46 & RET2 & 0 & 0 & $x+1$ &  & 3 & 1 & 0 & 0 & 2 & 2 & 0 & T & []  \Tstrut \Bstrut \\ 
47 & NXTT &  &  &  &  & 9 & 1 &  &  &  &  &  &  &   \Tstrut \Bstrut \\ 
48 & PUSH & 9 & 1 & 0 &  &  &  &  &  &  &  &  &  & [(0,0,2,2,0,T)]  \Tstrut \Bstrut \\ 
$\dots$ &  &  &  &  &  &  &  &  &  &  &  &  &  &   \Tstrut \Bstrut \\ 
50 & $\cong$ &  &  &  & $\cong$ &  &  &  &  &  &  &  &  &   \Tstrut \Bstrut \\ 
51 & RET2 & 0 & 0 & $x+1$ &  & 9 & 1 & 0 & 0 & 2 & 2 & 1 & T &   \Tstrut \Bstrut \\ 
52 & NXTT &  &  &  &  & 11 & 1 &  &  &  &  &  &  &   \Tstrut \Bstrut \\ 
53 & PUSH & 11 & 1 & 0 &  &  &  &  &  &  &  &  &  & [(0,0,2,2,1,T)]  \Tstrut \Bstrut \\ 
$\dots$ &  &  &  &  &  &  &  &  &  &  &  &  &  &   \Tstrut \Bstrut \\ 
55 & $\prec$ &  &  &  & $\prec$ &  &  &  &  &  &  &  &  &   \Tstrut \Bstrut \\ 
56 & RET2 & 0 & 0 & $x+1$ &  & 11 & 1 & 0 & 0 & 2 & 3 & 1 & T & []  \Tstrut \Bstrut \\ 
57 & NXTT &  &  &  &  & 13 & 1 &  &  &  &  &  &  &   \Tstrut \Bstrut \\ 
58 & PUSH & 13 & 1 & 0 &  &  &  &  &  &  &  &  &  & [(0,0,2,3,1,T)]  \Tstrut \Bstrut \\ 
$\dots$ &  &  &  &  &  &  &  &  &  &  &  &  &  &   \Tstrut \Bstrut \\ 
60 & $\cong$ &  &  &  & $\cong$ &  &  &  &  &  &  &  &  &   \Tstrut \Bstrut \\ 
61 & RET2 & 0 & 0 & $x+1$ &  & 13 & 1 & 0 & 0 & 2 & 3 & 2 & T & []  \Tstrut \Bstrut \\ 
62 & NCT &  &  &  &  & 1 & 3 &  &  &  & 0 & 0 &  &   \Tstrut \Bstrut \\ 
63 & PUSH & 1 & 3 & 0 &  &  &  &  &  &  &  &  &  & [(0,0,2,0,0,T)]  \Tstrut \Bstrut \\ 
$\dots$ &  &  &  &  &  &  &  &  &  &  &  &  &  &   \Tstrut \Bstrut \\ 
65 & $\prec$ &  &  &  & $\prec$ &  &  &  &  &  &  &  &  &   \Tstrut \Bstrut \\ 
66 & RET2 & 0 & 0 & $x+1$ &  & 1 & 3 & 0 & 0 & 2 & 1 & 0 & T & []  \Tstrut \Bstrut \\ 
67 & NXTT &  &  &  &  & 3 & 3 &  &  &  &  &  &  &   \Tstrut \Bstrut \\ 
68 & PUSH & 3 & 3 & 0 &  &  &  &  &  &  &  &  &  & [(0,0,2,1,0,T)]  \Tstrut \Bstrut \\ 
69 & NB &  &  & $x$ &  &  &  &  &  &  &  &  &  &   \Tstrut \Bstrut \\ 
70 & GC & 4 & 4 & 0 &  &  &  &  &  &  &  &  &  &   \Tstrut \Bstrut \\ 
71 & NB &  &  & $y+1$ &  &  &  &  &  &  &  &  &  &   \Tstrut \Bstrut \\ 
72 & SETH &  &  &  &  &  &  & 0 &  &  &  &  &  &   \Tstrut \Bstrut \\ 
73 & FINDS &  &  &  &  & 5 & 5 &  & 0 & 0 & 0 & 0 & S &   \Tstrut \Bstrut \\ 
74 & PUSH & 5 & 5 & 0 &  &  &  &  &  &  &  &  &  & [\dots,(0,0,0,0,0,S)]  \Tstrut \Bstrut \\ 
75 & NB &  &  & $y$ &  &  &  &  &  &  &  &  &  &   \Tstrut \Bstrut \\ 
76 & GC & 6 & 6 & 0 &  &  &  &  &  &  &  &  &  &   \Tstrut \Bstrut \\ 
$\dots$ &  &  &  &  &  &  &  &  &  &  &  &  &  &   \Tstrut \Bstrut \\ 
78 & $\cong$ &  &  &  & $\cong$ &  &  &  &  &  &  &  &  &   \Tstrut \Bstrut \\ 
79 & RET & 5 & 5 & $y$ &  &  &  &  &  &  &  &  &  &   \Tstrut \Bstrut \\ 
80 & $\cong$ &  &  &  & $\cong$ &  &  &  &  &  &  &  &  &   \Tstrut \Bstrut \\ 
81 & RET2 & 4 & 4 & $y+1$ &  & 5 & 5 & 0 & 0 & 1 & 0 & 0 & S & [(0,0,2,1,0,T)]  \Tstrut \Bstrut \\ 
82 & NXTS &  &  &  &  & 5 & 7 &  &  &  &  &  &  & [\dots,(0,0,1,0,0,S)]  \Tstrut \Bstrut \\ 
83 & PUSH & 5 & 7 & 0 &  &  &  &  &  &  &  &  &  &   \Tstrut \Bstrut \\ 
84 & NB &  &  & $y$ &  &  &  &  &  &  &  &  &  &   \Tstrut \Bstrut \\ 
85 & GC & 6 & 8 & 0 &  &  &  &  &  &  &  &  &  &   \Tstrut \Bstrut \\ 
$\dots$ &  &  &  &  &  &  &  &  &  &  &  &  &  &   \Tstrut \Bstrut \\ 
87 & $\cong$ &  &  &  & $\cong$ &  &  &  &  &  &  &  &  &   \Tstrut \Bstrut \\ 
88 & RET & 5 & 7 & $y$ &  &  &  &  &  &  &  &  &  &   \Tstrut \Bstrut \\ 
89 & $\cong$ &  &  &  & $\cong$ &  &  &  &  &  &  &  &  &   \Tstrut \Bstrut \\ 
90 & RET2 & 4 & 4 & $y+1$ &  & 5 & 7 & 0 & 0 & 2 & 0 & 0 & S & [(0,0,2,1,0,T)]  \Tstrut \Bstrut \\ 
91 & FINDT &  &  &  &  & 5 & 5 &  &  &  &  &  & T &   \Tstrut \Bstrut \\ 
92 & PUSH & 5 & 5 & 0 &  &  &  &  &  &  &  &  &  & [\dots,(0,0,2,0,0,T)]  \Tstrut \Bstrut \\ 
$\dots$ &  &  &  &  &  &  &  &  &  &  &  &  &  &   \Tstrut \Bstrut \\ 
94 & $\cong$ &  &  &  & $\cong$ &  &  &  &  &  &  &  &  &   \Tstrut \Bstrut \\ 
95 & RET2 & 4 & 4 & $y+1$ &  & 5 & 5 & 0 & 0 & 2 & 0 & 1 & T & [(0,0,2,1,0,T)]  \Tstrut \Bstrut \\ 
96 & NXTT &  &  &  &  & 7 & 5 &  &  &  &  &  &  &   \Tstrut \Bstrut \\ 
97 & PUSH & 7 & 5 & 0 &  &  &  &  &  &  &  &  &  & [\dots,(0,0,2,0,1,T)]  \Tstrut \Bstrut \\ 
$\dots$ &  &  &  &  &  &  &  &  &  &  &  &  &  &   \Tstrut \Bstrut \\ 
99 & $\cong$ &  &  &  & $\cong$ &  &  &  &  &  &  &  &  &   \Tstrut \Bstrut \\ 
100 & RET2 & 4 & 4 & $y+1$ &  & 7 & 5 & 0 & 0 & 2 & 0 & 2 & T & [(0,0,2,1,0,T)]  \Tstrut \Bstrut \\ 
101 & INCH &  &  &  &  &  &  & 2 &  &  &  &  &  &   \Tstrut \Bstrut \\ 
102 & $\cong$ &  &  &  & $\cong$ &  &  &  &  &  &  &  &  &   \Tstrut \Bstrut \\ 
103 & RET & 3 & 3 & $x$ &  &  &  &  &  &  &  &  &  &   \Tstrut \Bstrut \\ 
104 & $\cong$ &  &  &  & $\cong$ &  &  &  &  &  &  &  &  &   \Tstrut \Bstrut \\ 
105 & RET2 & 0 & 0 & $x+1$ &  & 3 & 3 & 0 & 0 & 2 & 1 & 1 & T & []  \Tstrut \Bstrut \\ 
106 & NXTT &  &  &  &  & 9 & 3 &  &  &  &  &  &  &   \Tstrut \Bstrut \\ 
107 & PUSH & 9 & 3 & 0 &  &  &  &  &  &  &  &  &  & [(0,0,2,1,1,T)]  \Tstrut \Bstrut \\ 
$\dots$ &  &  &  &  &  &  &  &  &  &  &  &  &  &   \Tstrut \Bstrut \\ 
109 & $\succ$ &  &  &  & $\succ$ &  &  &  &  &  &  &  &  &   \Tstrut \Bstrut \\ 
110 & RET2 & 0 & 0 & $x+1$ &  & 9 & 3 & 0 & 0 & 2 & 1 & 1 & T & []  \Tstrut \Bstrut \\ 
111 & NXTT &  &  &  &  & 11 & 3 &  &  &  &  &  &  &   \Tstrut \Bstrut \\ 
112 & PUSH & 11 & 3 & 0 &  &  &  &  &  &  &  &  &  & [(0,0,2,1,1,T)]  \Tstrut \Bstrut \\ 
$\dots$ &  &  &  &  &  &  &  &  &  &  &  &  &  &   \Tstrut \Bstrut \\ 
114 & $\prec$ &  &  &  & $\prec$ &  &  &  &  &  &  &  &  &   \Tstrut \Bstrut \\ 
115 & RET2 & 0 & 0 & $x+1$ &  & 11 & 3 & 0 & 0 & 2 & 2 & 1 & T & []  \Tstrut \Bstrut \\ 
116 & NXTT &  &  &  &  & 13 & 3 &  &  &  &  &  &  &   \Tstrut \Bstrut \\ 
117 & PUSH & 13 & 3 & 0 &  &  &  &  &  &  &  &  &  & [(0,0,2,2,1,T)]  \Tstrut \Bstrut \\ 
$\dots$ &  &  &  &  &  &  &  &  &  &  &  &  &  &   \Tstrut \Bstrut \\ 
119 & $\succ$ &  &  &  & $\succ$ &  &  &  &  &  &  &  &  &   \Tstrut \Bstrut \\ 
120 & RET2 & 0 & 0 & $x+1$ &  & 13 & 3 & 0 & 0 & 2 & 2 & 1 & T & []  \Tstrut \Bstrut \\ 
121 & NCT &  &  &  &  & 1 & 9 &  &  &  & 0 & 0 &  &   \Tstrut \Bstrut \\ 
122 & PUSH & 1 & 9 & 0 &  &  &  &  &  &  &  &  &  & [(0,0,2,0,0,T)]  \Tstrut \Bstrut \\ 
$\dots$ &  &  &  &  &  &  &  &  &  &  &  &  &  &   \Tstrut \Bstrut \\ 
124 & RET2 & 0 & 0 & $x+1$ &  & 13 & 9 & 0 & 0 & 2 & 3 & 2 & T & []  \Tstrut \Bstrut \\ 
125 & NCT &  &  &  &  & 1 & 11 &  &  &  & 0 & 0 &  &   \Tstrut \Bstrut \\ 
126 & PUSH & 1 & 11 & 0 &  &  &  &  &  &  &  &  &  & [(0,0,2,0,0,T)]  \Tstrut \Bstrut \\ 
$\dots$ &  &  &  &  &  &  &  &  &  &  &  &  &  &   \Tstrut \Bstrut \\ 
128 & RET2 & 0 & 0 & $x+1$ &  & 13 & 11 & 0 & 0 & 2 & 0 & 2 & T & []  \Tstrut \Bstrut \\ 
129 & INCH &  &  &  &  &  &  & 2 &  &  &  &  &  &   \Tstrut \Bstrut \\ 
130 & FINDS &  &  &  &  & 1 & 1 &  & 0 & 0 & 0 & 0 & S &   \Tstrut \Bstrut \\ 
131 & PUSH & 1 & 1 & 0 &  &  &  &  &  &  &  &  &  & [(2,0,0,0,0,S)]  \Tstrut \Bstrut \\ 
$\dots$ &  &  &  &  &  &  &  &  &  &  &  &  &  &   \Tstrut \Bstrut \\ 
133 & RET2 & 0 & 0 & $x+1$ &  & 1 & 13 & 2 & 0 & 2 & 0 & 0 & S & []  \Tstrut \Bstrut \\ 
134 & NCS &  &  &  &  & 3 & 1 &  & 0 & 0 &  &  &  &   \Tstrut \Bstrut \\ 
135 & PUSH & 3 & 1 & 0 &  &  &  &  &  &  &  &  &  & [(2,0,0,0,0,S)]  \Tstrut \Bstrut \\ 
$\dots$ &  &  &  &  &  &  &  &  &  &  &  &  &  &   \Tstrut \Bstrut \\ 
137 & RET2 & 0 & 0 & $x+1$ &  & 3 & 13 & 2 & 2 & 1 & 0 & 0 & S & []  \Tstrut \Bstrut \\ 
138 & FINDT &  &  &  &  & 1 & 1 &  &  &  &  &  & T &   \Tstrut \Bstrut \\ 
139 & PUSH & 1 & 1 & 0 &  &  &  &  &  &  &  &  &  & [(2,2,1,0,0,T)]  \Tstrut \Bstrut \\ 
$\dots$ &  &  &  &  &  &  &  &  &  &  &  &  &  &   \Tstrut \Bstrut \\ 
141 & RET2 & 0 & 0 & $x+1$ &  & 13 & 1 & 2 & 2 & 1 & 3 & 2 & T & []  \Tstrut \Bstrut \\ 
142 & NCT &  &  &  &  & 1 & 3 &  &  &  & 0 & 0 &  &   \Tstrut \Bstrut \\ 
143 & PUSH & 1 & 3 & 0 &  &  &  &  &  &  &  &  &  & [(2,2,1,0,0,T)]  \Tstrut \Bstrut \\ 
$\dots$ &  &  &  &  &  &  &  &  &  &  &  &  &  &   \Tstrut \Bstrut \\ 
145 & RET2 & 0 & 0 & $x+1$ &  & 13 & 3 & 2 & 2 & 1 & 2 & 1 & T & []  \Tstrut \Bstrut \\ 
146 & INCH &  &  &  &  &  &  & 3 &  &  &  &  &  &   \Tstrut \Bstrut \\ 
147 & FINDS &  &  &  &  & 1 & 1 &  & 0 & 0 & 0 & 0 & S &   \Tstrut \Bstrut \\ 
148 & PUSH & 1 & 1 & 0 &  &  &  &  &  &  &  &  &  & [(3,0,0,0,0,S)]  \Tstrut \Bstrut \\ 
$\dots$ &  &  &  &  &  &  &  &  &  &  &  &  &  &   \Tstrut \Bstrut \\ 
150 & RET2 & 0 & 0 & $x+1$ &  & 1 & 13 & 3 & 0 & 2 & 0 & 0 & S & []  \Tstrut \Bstrut \\ 
151 & NCS &  &  &  &  & 3 & 1 &  & 0 & 0 &  &  &  &   \Tstrut \Bstrut \\ 
152 & PUSH & 3 & 1 & 0 &  &  &  &  &  &  &  &  &  & [(3,0,0,0,0,S)]  \Tstrut \Bstrut \\ 
$\dots$ &  &  &  &  &  &  &  &  &  &  &  &  &  &   \Tstrut \Bstrut \\ 
154 & RET2 & 0 & 0 & $x+1$ &  & 3 & 13 & 3 & 2 & 1 & 0 & 0 & S & []  \Tstrut \Bstrut \\ 
155 & NCS &  &  &  &  & 9 & 1 &  & 0 & 0 &  &  &  &   \Tstrut \Bstrut \\ 
156 & PUSH & 9 & 1 & 0 &  &  &  &  &  &  &  &  &  & [(3,0,0,0,0,S)]  \Tstrut \Bstrut \\ 
$\dots$ &  &  &  &  &  &  &  &  &  &  &  &  &  &   \Tstrut \Bstrut \\ 
158 & RET2 & 0 & 0 & $x+1$ &  & 9 & 13 & 3 & 3 & 2 & 0 & 0 & S & []  \Tstrut \Bstrut \\ 
159 & FINDT &  &  &  &  & 1 & 1 &  &  &  &  &  & T &   \Tstrut \Bstrut \\ 
160 & PUSH & 1 & 1 & 0 &  &  &  &  &  &  &  &  &  & [(3,3,2,0,0,T)]  \Tstrut \Bstrut \\ 
$\dots$ &  &  &  &  &  &  &  &  &  &  &  &  &  &   \Tstrut \Bstrut \\ 
162 & RET2 & 0 & 0 & $x+1$ &  & 13 & 1 & 3 & 3 & 2 & 3 & 2 & T &   \Tstrut \Bstrut \\ 
163 & INCH &  &  &  &  &  &  & 5 &  &  &  &  &  &   \Tstrut \Bstrut \\ 
164 & $\cong$ &  &  &  & $\cong$ &  &  &  &  &  &  &  &  &   \Tstrut \Bstrut \\ 
\end{longtable}
\end{center}

\begin{table}[h!]
	\caption{Tree Isomorphism: Adjacency matrix of CFG}    
	\begin{center}
		\rotatebox{90}{
\rowcolors{2}{gray!15}{white}
\begin{tabular}{  l | l  l  l l  l  l  l  l  l  l  l  l  l  l  l  l  l }     
 & \rotatebox{90}{INIT} & \rotatebox{90}{NB} & \rotatebox{90}{GC} & \rotatebox{90}{RET} & \rotatebox{90}{$\cong$} & \rotatebox{90}{$\prec$} & \rotatebox{90}{$\succ$} & \rotatebox{90}{PUSH} & \rotatebox{90}{RET2} & \rotatebox{90}{SETH} & \rotatebox{90}{INCH} & \rotatebox{90}{FINDS} & \rotatebox{90}{FINDT} & \rotatebox{90}{NXTS} & \rotatebox{90}{NXTT} & \rotatebox{90}{NCS} & \rotatebox{90}{NCT} \\
 \hline 
INIT &  & \ref{tab:ti_trace1}:1 &  &  & $\star$ & $\star$ & $\star$ &  &  &  &  &  &  &  &  &  & \\
NB &  &  & \ref{tab:ti_trace1}:2 &  &  &  &  &  &  & \ref{tab:ti_trace3}:2 &  &  &  &  &  &  & \\
GC &  & \ref{tab:ti_trace1}:3 &  &  & \ref{tab:ti_trace1}:5 & $\star$ & \ref{tab:ti_trace2}:3 &  &  &  &  &  &  &  &  &  & \\
RET &  & \ref{tab:ti_trace1}:9 &  &  & \ref{tab:ti_trace1}:7 & $\star$ & \ref{tab:ti_trace2}:5 &  &  &  &  &  &  &  &  &  & \\
$\cong$ &  &  &  & \ref{tab:ti_trace1}:6 &  &  &  &  & \ref{tab:ti_trace3}:30 &  &  &  &  &  &  &  & \\
$\prec$ &  &  &  & \ref{tab:ti_trace3}:9 &  &  &  &  & \ref{tab:ti_trace3}:11 &  &  &  &  &  &  &  & \\
$\succ$ &  &  &  & \ref{tab:ti_trace2}:4 &  &  &  &  & \ref{tab:ti_trace3}:109 &  &  &  &  &  &  &  & \\
PUSH &  & \ref{tab:ti_trace3}:5 &  &  & $\star$ & $\star$ & $\star$ &  &  &  &  &  &  &  &  &  & \\
RET2 &  &  &  &  &  &  &  &  &  &  & \ref{tab:ti_trace3}:100 &  & \ref{tab:ti_trace3}:36 & \ref{tab:ti_trace3}:12 & \ref{tab:ti_trace3}:41 & \ref{tab:ti_trace3}:133 & \ref{tab:ti_trace3}:61 \\
SETH &  &  &  &  &  &  &  &  &  &  &  & \ref{tab:ti_trace3}:3 &  &  &  &  & \\
INCH &  &  &  &  & \ref{tab:ti_trace3}:101 &  &  &  &  &  &  & \ref{tab:ti_trace3}:129 &  &  &  &  & \\
FINDS &  &  &  &  &  &  &  & \ref{tab:ti_trace3}:4 &  &  &  &  &  &  &  &  & \\
FINDT &  &  &  &  &  &  &  & \ref{tab:ti_trace3}:37 &  &  &  &  &  &  &  &  & \\
NXTS &  &  &  &  &  &  &  & \ref{tab:ti_trace3}:13 &  &  &  &  &  &  &  &  & \\
NCS &  &  &  &  &  &  &  & \ref{tab:ti_trace3}:151 &  &  &  &  &  &  &  &  & \\
NXTT &  &  &  &  &  &  &  & \ref{tab:ti_trace3}:42 &  &  &  &  &  &  &  &  & \\
NCT &  &  &  &  &  &  &  & \ref{tab:ti_trace3}:62 &  &  &  &  &  &  &  &  & \\
\end{tabular}
}
	\end{center}            
	\label{tab:ti_adj}
\end{table}

\end{document}